\title[Viscoelastic fluids]{Implicit type constitutive relations for elastic solids and their use in the development of mathematical models for viscoelastic fluids}
\author{V\'{\i}t Pr\r{u}\v{s}a}
\date{\today}
\address{
Faculty of Mathematics and Physics\\
Charles University\\
Sokolovsk\'a 83\\
Praha 8 -- Karl\'{\i}n\\
CZ 186\;75\\
Czech Republic
}
\email{prusv@karlin.mff.cuni.cz}
\author{K. R. Rajagopal}
\address{
Texas A\&M University\\
Department of Mechanical Engineering\\
3123 TAMU\\
College Station TX 77843-3123\\
United States of America
}
\email{krajagopal@tamu.edu}
\thanks{V\'{\i}t Pr\r{u}\v{s}a thanks the Czech Science Foundation for the support through the project 20-11027X. K.~R.~Rajagopal thanks the Office of Naval Research for its support.}
\keywords{viscoelastic fluids; constitutive relations; thermodynamics; Hencky strain; Gibbs free energy; Helmholtz free energy}
\subjclass[2000]{%
  76A05
}
\DeclareMathOperator{\divergence}{div}
\DeclareMathOperator{\Tr}{Tr}
\newcommand{\exponential}[1]{\ensuremath{{\mathrm e}^{#1}}}
\newcommand{\reference}{\mathrm{ref}}
\newcommand{\bydefinition}{\mathrm{def}}
\newcommand{\traceless}[1]{{#1}_{\delta}}
\newcommand{\diff}{\mathrm{d}}
\newcommand{\Diff}[1][]{\mathrm{D}_{#1}} 
\renewcommand{\vec}[1]{\ensuremath{\mathbf{#1}}}
\renewcommand{\vec}[1]{\ensuremath{\bm{#1}}}%
\newcommand{\tensorq}[1]{\ensuremath{\mathbb{#1}}}      
\newcommand{\tensorc}[1]{\ensuremath{\mathrm{#1}}}      
\newcommand{\transpose}[1]{#1^\top}
\newcommand{\inverse}[1]{#1^{-1}}
\newcommand{\identity}{\ensuremath{\tensorq{I}}} 
\newcommand{\tensorzero}{{\mathbb{O}}} 
\newcommand{\cstress}{\tensorq{T}}
\newcommand{\thcstress}{\ensuremath{\cstress_{\mathrm{th}}}} 
\newcommand{\fgrad}{\tensorq{F}}
\newcommand{\rcg}{\tensorq{C}}
\newcommand{\rcgrel}[3][]{\rcg^{#1}_{#2}\left(#3\right)}
\newcommand{\lcg}{\tensorq{B}}
\newcommand{\henckystrain}{\tensorq{H}} 
\newcommand{\generictensor}{{\tensorq{A}}}
\newcommand{\generictensorc}{\tensorc{A}} 
\newcommand{\dcstresssymb}{\traceless{\cstress}}
\newcommand{\gradsym}{\ensuremath{\tensorq{D}}}
\newcommand{\gradvl}{\ensuremath{\tensorq{L}}}
\newcommand{\functional}[1]{{\mathfrak #1}}
\newcommand{\fhistory}[3]{{\functional{#1}_{#2}^{#3}}}
\newcommand{\kdelta}[1]{\tensor{\delta}{#1}}
\newcommand{\R}{\ensuremath{{\mathbb R}}}
\newcommand{\ienergy}{\ensuremath{e}} 
\newcommand{\fenergy}{\ensuremath{\psi}} 
\newcommand{\entropy}{\ensuremath{\eta}} 
\newcommand{\gibbs}{\ensuremath{g}} 
\newcommand{\temp}{\ensuremath{\theta}} 
\newcommand{\thpressure}{\ensuremath{p_{\mathrm{th}}}} 
\newcommand{\mns}{\ensuremath{m}} 
\newcommand{\cheatvolref}{\ensuremath{c_{\mathrm{V}, \reference}}} 
\newcommand{\hfluxc}{\vec{j}_{q}}     
\newcommand{\entfluxc}{\vec{j}_{\entropy}} 
\newcommand{\entprodc}{\xi} 
\newcommand{\pd}[2]{\ensuremath{\frac{\partial {#1}}{\partial {#2}}}}
\newcommand{\ppd}[2]{\ensuremath{\frac{\partial^2 {#1}}{\partial {#2^2}}}}
\newcommand{\dd}[2]{\ensuremath{\frac{\diff {#1}}{\diff {#2}}}}
\newcommand{\fid}[1]{\ensuremath{\accentset{\triangledown}{#1}}}
\newcommand{\absnorm}[1]{\ensuremath{\left|#1\right|}}
\newcommand{\tensortensor}[2]{\ensuremath{#1 \otimes #2}}
\newcommand{\tensordot}[2]{\ensuremath{#1 \vdotdot #2}} 
\newcommand{\tensordot}[2]{\ensuremath{#1 : #2}} 
\newcommand{\vectordot}[2]{\ensuremath{#1 \bullet #2}}
\newcommand{\tensorschur}[2]{\ensuremath{#1 \circ #2}} 
\newcommand{\tensorf}[1]{{\mathfrak{#1}}}
\newcommand{\nplacer}{\mathnormal{\kappa}_{\mathnormal{p}(\mathnormal{t})}}  
\newcommand{\fgradrng}{\ensuremath{\tensorq{G}}} 
\newcommand{\lcgnc}{\ensuremath{\lcg_{\nplacer}}} 
\newcommand{\fgradnc}{\ensuremath{\fgrad_{\nplacer}}} 
\newcommand{\detfgradnc}{\ensuremath{J_{\nplacer}}} 
\newcommand{\henckystrainnc}{\tensorq{H}_{\nplacer}} 
\newcommand{\henckync}{\henckystrainnc} 
\newcommand{\devhenckync}{\overline{\henckync}} 
\newcommand{\devhenckyncdelta}{\tensorq{H}_{\nplacer, \, \mathnormal{\delta}}} 
\newcommand{\sphenckync}{h_{\nplacer}} 
\newcommand{\thcstressnc}{\ensuremath{\cstress_{\nplacer}}}
\newcommand{\devthcstressncdelta}{\ensuremath{\cstress_{\nplacer, \, \mathnormal{\delta}}}} 
\newcommand{\thpressurenc}{\ensuremath{p_{\nplacer}}} 
\newcommand{\thcstressncrho}{\ensuremath{\tensorq{S}_{\nplacer}}} 
\newcommand{\thcstressncrhoc}{\ensuremath{\tensorc{S}_{\nplacer}}} 
\newcommand{\devthcstressncrho}{\ensuremath{\tensorq{S}_{\nplacer,\, \mathnormal{\delta}}}} 
\newcommand{\thpressurencrho}{\ensuremath{s_{\nplacer}}} 
\newtheorem{theorem}{Theorem} 
\theoremstyle{proposition}
\newenvironment{summary}[1][]
    {
    \begin{summaryflt}[tb]
        \begin{summarycont}[#1] 
    }
    {
        \end{summarycont}
        \end{summaryflt}
    }
\newmdenv[style=scholionstyle]{scholion}
\let\cite\citet
\numberwithin{equation}{section}
\renewcommand{\tensordot}[2]{\ensuremath{#1 \bullet #2}}
\newtheorem{Theorem}{Theorem} 
\newtheorem{Lemma}[theorem]{Lemma} 
\begin{document}

\begin{abstract}
    Viscoelastic fluids are non-Newtonian fluids that exhibit both ``viscous'' and ``elastic'' characteristics in virtue of mechanisms to store energy and produce entropy. Usually the energy storage properties of such fluids are modelled using the same concepts as in the classical theory of nonlinear solids. Recently new models for elastic solids have been successfully developed by appealing to implicit constitutive relations, and these new models offer a different perspective to the old topic of elastic response of materials. In particular, a sub-class of implicit constitutive relations, namely relations wherein the left Cauchy-Green tensor is expressed as a function of stress is of interest. We show how to use this new perspective it the development of mathematical models for viscoelastic fluids, and we provide a discussion of the thermodynamic underpinnings of such models. We focus on the use of Gibbs free energy instead of the Helmholtz free energy, and using the standard Giesekus/Oldroyd-B models, we show how the alternative approach works in the case of well-known models. The proposed approach is straightforward to generalise to more complex setting wherein the classical approach might be impractical of even inapplicable.
\end{abstract}

\maketitle


\section{Introduction}
\label{sec:introduction}
Most of the classical treatises on continuum mechanics when discussing constitutive theory start with the assumption that \emph{the stress is given as a function or a functional of the deformation}, see for example~\cite{truesdell.c.noll.w:non-linear*1}, \cite{muller.i:thermodynamics} and~\cite{coleman.bd.noll.w:approximation}. However, it has been argued that this approach is too restrictive, see especially~\cite{rajagopal.kr:on*3,rajagopal.kr:on*4}.

For example, it turns out that it is worthwhile to generalise the classical incompressible Navier--Stokes fluid model $\cstress = - p \identity + 2 \nu \gradsym$ as
\begin{subequations}
  \label{eq:1}
  \begin{align}
    \label{eq:2}
    \cstress &= - p \identity + \traceless{\cstress}, \\
    \label{eq:3}
    \gradsym  &= \tensorf{f} \left( \traceless{\cstress} \right),
  \end{align}
\end{subequations}
where we use the standard notation $\cstress$ for the Cauchy stress tensor, $p$ for the mechanical pressure, $\gradsym$ for the symmetric part of the velocity gradient and $\tensorf{f}$ for a tensorial function. This approach to the generalisation of the classical incompressible Navier--Stokes fluid model is \emph{in contrast with the classical approach} based on the formula
\begin{equation}
  \label{eq:143}
  \cstress = - p \identity + \tensorf{g} \left( \gradsym \right),  
\end{equation}
where $\tensorf{g}$ is a tensorial function. (Here one can think of classical models for shear-thinning/thickening fluids.) Note that even if the relation $\gradsym  = \tensorf{f} \left( \traceless{\cstress} \right)$ is invertible, and hence can be rewritten in the classical form $\traceless{\cstress} = \tensorf{g} \left( \gradsym \right)$, it might be still convenient to use the representation~\eqref{eq:3} because the representation~\eqref{eq:3} might be much simpler than the classical one. (Meaning that the formula $\gradsym  = \tensorf{f} \left( \traceless{\cstress} \right)$ might have a nice analytical form, while the inverse formula $\traceless{\cstress} = \tensorf{g} \left( \gradsym \right)$ could be complicated or impossible to write down using elementary functions.) If~\eqref{eq:3} is not invertible, then the two approaches are clearly strikingly different. Furthermore one can even go one step further and replace~\eqref{eq:3} by a more general \emph{implicit relation} $\tensorf{k} \left( \traceless{\cstress}, \gradsym \right) = \tensorzero$, where $\tensorf{k}$ is a tensorial function and~$\tensorzero$ denotes the zero tensor.
 
Similarly, in the context of mathematical models for elastic solids, one can generalise the standard isotropic Cauchy elastic materials wherein one assumes that $\cstress = \tensorf{h}\left( \lcg \right)$, where $\tensorf{h}$ is a tensorial function, and $\lcg$ denotes the left Cauchy--Green tensor. The generalisation of this formula is
\begin{equation}
  \label{eq:4}
  \lcg = \tensorf{i}\left( \cstress \right)
\end{equation}
or even $\tensorf{j}\left( \cstress, \lcg \right)  = \tensorzero$, where $\tensorf{i}$ and $\tensorf{j}$ are again tensorial functions. Clearly, everything that has been said in the context of models for fluids holds as well for the models for elastic solids. 

\emph{Since viscoelastic fluids are non-Newtonian fluids that exhibit both ``viscous'' and ``elastic'' characteristics, it is desirable to see how to exploit the concept of implicit type constitutive relations in the context of mathematical models for these fluids}. In particular, the phenomenological theory of viscoelastic rate-type fluids is based on the assumption that the elastic response of the fluid is modelled using the same concepts as in the classical theory of nonlinear solids. (Especially the assumed formulae for the Helmholtz free energy borrow heavily from the classical theory of nonlinear solids.) However, if one wants to model the elastic response in the spirit of recent advances in theory of nonlinear elastic solids that exploit constitutive relations of type~\eqref{eq:4}, the phenomenological derivation of mathematical models viscoelastic rate-type fluids must be modified accordingly. Such modifications are discussed---with a special emphasis on the corresponding thermodynamic basis---in the current contribution.

The paper is organised as follows. First we briefly review the concept of constitutive relations for fluids and solids, that are developed using the approach~\eqref{eq:1} or~\eqref{eq:4} and its generalisations to the fully implicit setting $\tensorf{k} \left( \traceless{\cstress}, \gradsym \right) = \tensorzero$ or $\tensorf{j}\left( \cstress, \lcg \right)  = \tensorzero$, see Section~\ref{sec:mater-spec-via}. Then we proceed with a brief review of the classical derivation of viscoelastic rate-type models that is based on the concept of natural configuration and the characterisation of the elastic response with the Helmholtz free energy, see Section~\ref{sec:visc-rate-type-1}.

In Section~\ref{sec:gibbs-free-energy} we identify the Hencky strain tensor as the convenient tensorial quantity and the Gibbs free energy as the convenient thermodynamic potential that allows one to work with (elastic) constitutive response of type~\eqref{eq:4}, and we outline a general procedure that allows one to develop novel viscoelastic rate-type models based on these concepts. In Section~\ref{sec:gies-visc-rate} we study the classical Giesekus/Oldroyd-B model both in the classical framework and the newly proposed framework. For this model it is possible to write down explicitly all formulae in both approaches, which allows us to clearly document the general theory. Finally, we also discuss the applicability of the maximisation of the entropy production hypothesis, see~\cite{rajagopal.kr.srinivasa.ar:on*7}, in the theory of constitutive relations.

\section{Materials specified via implicit constitutive relations}
\label{sec:mater-spec-via}

Before we proceed with viscoelastic fluids, we briefly summarise state-of-the-art regarding the novel approach to the mathematical modelling of \emph{viscous} fluids and \emph{elastic} solids.

\subsection{Fluids}
\label{sec:fluids}
A simple generalisation of the Navier--Stokes fluid as indicated in~\eqref{eq:1} is the constitutive relation in the form
\begin{equation}
  \label{eq:5}
  \gradsym
  =
  \left(
    \alpha
    \left(
      1
      +
      \beta
      \absnorm{\traceless{\cstress}}^2
    \right)^n
    +
    \delta
  \right)
  \traceless{\cstress}
  ,
\end{equation}
introduced in \cite{roux.c.rajagopal.kr:shear}. (For a through discussion of a simpler model with $\delta=0$ see~\cite{malek.j.prusa.ea:generalizations}.) With properly tuned parameter values this constitutive relation leads to the characteristic S-shaped curve in the strain rate--stress diagram, hence the model~\eqref{eq:5} can serve as simple phenomenological model for flows of substances that exhibit such behaviour, see for example \cite{boltenhagen.p.hu.y.ea:observation}, \cite{grob.m.heussinger.c.ea:jamming} and \cite{mari.r.seto.r.ea:nonmonotonic} to name a few. (Further references and a thorough discussion can be found in \cite{perlacova.t.prusa.v:tensorial} and \cite{janecka.a.prusa.v:perspectives}.) Note that if the constitutive relation leads to the S-shaped curve in the strain rate--stress diagram, then~\eqref{eq:5} is not invertible, and it can not be rewritten in the classical form $\traceless{\cstress} = \tensorf{g} \left( \gradsym \right)$. Other models that are easy to describe using~\eqref{eq:1} are the models for incompressible fluids with pressure dependent viscosity, see~\cite{rajagopal.kr:on*4}, or for that matter models wherein the viscosity depends on the stress such as the models used for the flow of ice, see~\cite{blatter.h:velocity} and \cite{pettit.ec.waddington.ed:ice} to name a few and also~\cite{malek.j.prusa.v:derivation} for further references regarding this class of models.

Other models that are easy to decscribe using~\eqref{eq:1} or the implicit relation $\tensorf{k} \left( \traceless{\cstress}, \gradsym \right) = \tensorzero$ are the models with activation criterion introduced by~\cite{bingham.ce:fluidity} and \cite{herschel.wh.bulkley.r:konsistenzmessungen}. These models are usually, see for example~\cite{duvaut.g.lions.jl:inequalities}, written down in the form of a dichotomy relation 
\begin{equation}
\label{eq:6}
\absnorm{\traceless{\cstress}} \le \tau_{*} \Leftrightarrow \gradsym= \tensorzero \quad\text{and}\quad
\absnorm{\traceless{\cstress} }>\tau_{*} \Leftrightarrow \traceless{\cstress}=\frac{\tau_{*} \gradsym}{\absnorm{\gradsym}} +
2\nu (\absnorm{\gradsym}) \gradsym,
\end{equation}
where $\nu$ is a positive function and $\tau_*$ is the yield stress.) If we use the alternative framework based on the constitutive relation~$\tensorf{k} \left( \traceless{\cstress}, \gradsym \right) = \tensorzero$, then these models can be rewritten as 
\begin{equation}
  \label{eq:7}
\gradsym = \frac{1}{2 \nu (\absnorm{\gradsym})} \frac{\left(\absnorm{\traceless{\cstress}} - \tau_*\right)^{+}}{\absnorm{\traceless{\cstress}}} \traceless{\cstress},
\end{equation}
where $x^{+}=\max\{x,0\}$ and $\nu$ is a non-negative function. This observation can be exploited in the discussion of the physical and mathematical properties of the models, see for example~\cite{rajagopal.kr.srinivasa.ar:on*8} and \cite{bulcek.m.gwiazda.p.ea:on*3}. For further discussion of generalised models of type $\tensorf{k} \left( \traceless{\cstress}, \gradsym \right) = \tensorzero$ we also refer the reader to~\cite{perlacova.t.prusa.v:tensorial}. Note that in all cases discussed above the constitutive relation between the stress and the symmetric part of the velocity gradient has been related to entropy production mechanisms, see for example~\cite{janecka.a.prusa.v:perspectives} for a through discussion.

Reader interested in analytical and semianalytical solutions to boundary value problems for fluids described by constitutive relations of the type  $\traceless{\cstress} = \tensorf{g} \left( \gradsym \right)$ or $\tensorf{k} \left( \traceless{\cstress}, \gradsym \right) = \tensorzero$ is referred to~\cite{malek.j.prusa.ea:generalizations}, \cite{roux.c.rajagopal.kr:shear}, \cite{srinivasan.s.karra.s:flow}, \cite{narayan.spa.rajagopal.kr:unsteady}, \cite{fusi.l.farina.a:flow}, \cite{fusi.l:channel}, \cite{housiadas.kd.georgiou.gc.ea:note}, \cite{gomez-constante.jp.rajagopal.kr:flow} and~\cite{fetecau.c.bridges.c:analytical} to name a few. Numerical solution of the corresponding governing equations is investigated in~\cite{janecka.a.malek.j.ea:numerical}, while rigorous numerical analysis for various models that fall into this class is discussed in~\cite{diening.l.kreuzer.c.ea:finite}, \cite{stebel.j:finite,hirn.a.lanzendorfer.m.ea:finite}, \cite{suli.e.tscherpel.t:fully}, \cite{farrell.pe.gazca-orozco.pa.ea:numerical} and \cite{farrell.pe.gazca-orozco.pa:augmented}. Rigorous mathematical theory for some of the aforementioned models is developed in \cite{bulcek.m.gwiazda.p.ea:on*2,bulcek.m.gwiazda.p.ea:on*3} and~\cite{maringova.e.zabensky.j:on}, see also~\cite{blechta.j.malek.j.ea:on} for a recent review.

Further generalisation of the algebraic relation $\tensorf{k} \left( \traceless{\cstress}, \gradsym \right) = \tensorzero$ is the implicit relation of the type
\begin{equation}
\label{eq:8}
    \fhistory{H}{s=0}{+\infty}\left(\cstress(t-s), \rcgrel{t}{t-s}\right)  = \tensorzero,
\end{equation}
that relates the histories of the stress tensor and the relative right Cauchy--Green tensor, for details see~\cite{prusa.rajagopal.kr:on}. Special instances of~\eqref{eq:8} are rate-type and differential-type models for viscoelastic fluids and integral models for viscoelastic fluids, while in all these cases one can consider models with pressure/stress dependent material parameters, see for example~\cite{karra.s.prusa.ea:on}, \cite{kannan.k.rajagopal.kr:model}, \cite{housiadas.kd:internal,housiadas.kd:viscoelastic} and~\cite{arcangioli.m.farina.a.ea:constitutive} and references therein.

\subsection{Solids}
\label{sec:solids}
Regarding the similar development in the case of constitutive relations for solids, wherein the classical approach is based on constitutive relations in the form $\cstress = \tensorf{h}\left( \lcg \right)$. The non-standard relations of the type $\lcg = \tensorf{i}\left( \cstress \right)$ or even $\tensorf{j}\left( \cstress, \lcg \right)  = \tensorzero$ have been considered without a proper thermodynamic basis or clearly articulated rationale for the choice of such constitutive relations in some early works such as~\cite{morgan.aja:some}, \cite{fitzgerald.je:tensorial} or~\cite{blume.ja:on} and \cite{xiao.h.chen.l:hencky,xiao.h.bruhns.ot.ea:explicit}. A systematic study of constitutive relations of the type $\lcg = \tensorf{i}\left( \cstress \right)$ with coherent explanation based on issues of causality for the choice of such constitutive relations was initiated by~\cite{rajagopal.kr:on*3}, and the number of works focused on this concept has been growing ever since. Regarding elastic (non-dissipative) solids we refer the reader to a recent review by~\cite{bustamante.r.rajagopal.kr:review} and references therein, while some aspects of the description of non-elastic response such as plasticity are discussed in~\cite{rajagopal.kr.srinivasa.ar:implicit*1,rajagopal.kr.srinivasa.ar:inelastic} and~\cite{cichra.d.prusa.v:thermodynamic}. (A discussion and references on works dealing the invertibility of the constitutive relations can be found in~\cite{sfyris.d.bustamante.r:use}, one of the earliest studies of the same is that by~\cite{truesdell.c.moon.h:inequalities}.) As an example of the benefits of the proposed change of perspective we can mention especially the study by~\cite{muliana.a.rajagopal.kr.ea:determining}, who address the classical topic of mathematical modelling of the mechanical response of rubber.

From the current perspective, the important aspects of the novel approach to the constitutive theory of solids are the following. First, if one wants to work with constitutive relations of the type~$\lcg = \tensorf{i}\left( \cstress \right)$, then it is convenient to work with the Hencky strain tensor $\henckystrain =_{\bydefinition} \frac{1}{2} \ln \lcg$ instead of the left Cauchy--Green tensor $\lcg$. (This observation is made in various works in elasticity and inelasticity, see especially~\cite{xiao.h.bruhns.ot.ea:explicit,xiao.h.chen.l:hencky,xiao.h.bruhns.ot.ea:elastoplasticity}. Note also that in the context of viscoelastic fluids the preferred quantity to work with is---from the numerical perspective---the log-conformation tensor, see~\cite{fattal.r.kupferman.r:constitutive,fattal.r.kupferman.r:time-dependent,hulsen.ma.fattal.r.ea:flow}, that is the logarithm of the conformation tensor. Apparently, this is not a coincidence.) Second, the thermodynamic potential of choice is the Gibbs free energy, see~\cite{rajagopal.kr.srinivasa.ar:gibbs-potential-based} and~\cite{srinivasa.ar:on}. In particular, in the case of elastic solids the formula for the Hencky strain $\henckystrain = \tensorf{i}\left( \cstress \right)$ arises via the differentiation of a potential, and the potential can be identified as the Gibbs free energy, see especially~\cite{gokulnath.c.saravanan.u.ea:representations} and~\cite{prusa.v.rajagopal.kr.ea:gibbs}. 

\section{Viscoelastic rate-type fluids}
\label{sec:visc-rate-type-1}
Since viscoelastic fluids are fluids that exhibit both ``viscous'' and ``elastic'' characteristics, see for example~\cite{snoeijer.jh.pandey.a.ea:relationship} for a recent discussion concerning the same, one can ask whether the novel approach to the modelling of the response elastic solids can be also applied in this context. The point is the following. Traditionally the elastic properties of viscoelastic fluids are from the phenomenological perspective modelled using the same concepts as in the classical theory of nonlinear solids wherein the Cauchy stress tensor is assumed to be a function of the left Cauchy--Green tensor. \emph{The question is what happens if an opposite perspective to the classical one is taken}---the left Cauchy--Green stress tensor is assumed to be a function the stress tensor, or one even assumes that these quantities are related by a implicit algebraic relation.

We focus on this question, and we show how to use the new perspective in the development of mathematical models for viscoelastic fluids, and discuss in detail the thermodynamic underpinnings of such models. In particular we focus on the use of Gibbs free energy instead of the Helmholtz free energy. Using the standard Giesekus/Oldroyd-B models, we show how the alternative approach works in the case of well-known models, and we argue that the proposed approach is straightforward to generalise to more complex settings.

Our basic framework will be the framework based on the concept of evolving natural configuration, see~\cite{eckart.c:thermodynamics*3} and~\cite{rajagopal.kr.srinivasa.ar:thermodynamic}, that has been successfully applied in various settings, see for example~\cite{malek.j.rajagopal.kr.ea:thermodynamically,narayan.spa.little.dn.ea:modelling,malek.j.rajagopal.kr.ea:thermodynamically*1}, \cite{malek.j.rajagopal.kr.ea:derivation}, \cite{tuma.k.stein.j.ea:motion}, \cite{malek.j.prusa.v.ea:thermodynamics}, \cite{rehor.m.gansen.a.ea:comparison} or \cite{sumith.s.kannan.k.ea:constitutive} to name a few. (For other approaches to the mathematical modelling of viscoelastic rate-type fluids see~\cite{leonov.ai:nonequilibrium,leonov.ai:on}, \cite{wapperom.p.hulsen.ma:thermodynamics} or the GENERIC framework, see~\cite{grmela.m.ottinger.hc:dynamics,ottinger.hc.grmela.m:dynamics} and \cite{pavelka.m.klika.v.ea:multiscale}.)

\section{Gibbs free energy based approach to viscoelastic rate-type fluids}
\label{sec:gibbs-free-energy}

The concept of evolving natural configuration is based on the assumption that the overall response of a viscoelastic fluid is composed of a dissipative (viscous) response and a non-dissipative (elastic) response, while the latter can be described using the concepts known from the theory of nonlinear elasticity. However, the left Cauchy--Green tensor for the total response $\lcg$ must be replaced by the left Cauchy--Green tensor $\lcgnc$ associated to the elastic part of the fluid response. The subscript $\nplacer$ denotes the instantaneously relaxed or ``natural'' configuration, see~\cite{rajagopal.kr.srinivasa.ar:thermodynamic} for a detailed rationale. (Another detailed explanation of the concept of natural configuration is given in a recent review article by~\cite{malek.j.prusa.v:derivation}. However, note that in the current contribution we do not explicitly use the decomposition of the total deformation gradient $\fgrad = \fgradnc \fgradrng$ to the dissipative and non-dissipative part. In particular we do not explicitly work with the tensor $\fgradrng$, which makes our approach different from the treatment in~\cite{rajagopal.kr.srinivasa.ar:thermodynamic} and~\cite{malek.j.rajagopal.kr.ea:on}.) If we want to exploit the novel approach to the modelling of elastic response, we have to mimic~\eqref{eq:4}, but the full Cauchy--Green tensor $\lcg$ must be replaced by the partial Cauchy--Green tensor~$\lcgnc$. 

Furthermore, as we have already indicated, the partial left Cauchy--Green tensor $\lcgnc$ should be replaced by the corresponding Hencky strain tensor, and the thermodynamic potential of choice should be the Gibbs free energy. The details are worked out below. Note that unlike in other works focused on the use of Gibbs potential, see especially~\cite{narayan.spa.little.dn.ea:nonlinear}, we are not assuming that the processes of interest are isothermal. The theory outlined below is applicable in non-isothermal setting.

The main idea behind the presented approach is that the behaviour of the material in the processes of interest is determined by two factors, namely its ability to \emph{store energy} and \emph{produce entropy}. The energy storage mechanisms are in the present case specified by the choice of the Gibbs free energy $\gibbs$, while the entropy production mechanisms are specified by the choice of the formula for the entropy production $\entprodc$. Since the entropy $\entropy$ is obtained via differentiation of $\gibbs$, the evolution equation for $\entropy$ then necessarily links the assumed form of the Gibbs free energy $\gibbs$ and the assumed entropy production $\entprodc$. This link can be exploited in the identification of the constitutive relations. Consequently, the first step of the proposed approach is to derive an evolution equation for the entropy.

\subsection{Entropy evolution equation -- general case}
\label{sec:entr-evol-equat}

The fundamental equation in continuum mechanics is the evolution equation for the specific internal energy $\ienergy$, that is for the internal energy per unit mass, $\left[ \ienergy \right] = \unitfrac{J}{kg}$ reads
\begin{equation}
  \label{eq:9}
  \rho
  \dd{\ienergy}{t} = \tensordot{\cstress}{\gradsym} - \divergence \hfluxc,
\end{equation}
see standard texts on continuum thermodynamics such as \cite{truesdell.c.noll.w:non-linear*1} or \cite{muller.i:thermodynamics}. (The notation in~(\ref{eq:9}) is the standard one, the symbol $\cstress$ stands for the Cauchy stress tensor, $\gradsym$ denotes the symmetric part of the velocity gradient, $\rho$ denotes the density, and $\hfluxc$ denotes the heat flux.) Evolution equation~\eqref{eq:9} is the starting point for the derivation of the sought entropy evolution equation.

In the present approach to the theory of viscoelastic rate-type fluids, we assume that the specific internal energy is a function of the specific entropy $\entropy$, density $\rho$, and the Hencky strain tensor associated to the elastic response, 
\begin{equation}
  \label{eq:10}
  \henckync =_{\bydefinition} \frac{1}{2} \ln \lcgnc.
\end{equation}
Furthermore, it is convenient to split the Hencky strain tensor to the traceless (deviatoric) part $\devhenckyncdelta$ and the spherical part $\sphenckync$,
\begin{equation}
  \label{eq:11}
  \henckync = \sphenckync \identity + \devhenckyncdelta,
\end{equation}
where we introduce the notation $\traceless{\generictensor} = _{\bydefinition} \generictensor - \frac{1}{3} \left(\Tr \generictensor \right) \identity$ for the traceless (deviatoric) part of the corresponding tensor. We note that the deviatoric part of the Hencky strain tensor $\devhenckyncdelta$ is equal to the rescaled Hencky strain tensor 
\begin{equation}
  \label{eq:12}
  \devhenckync =_{\bydefinition} \frac{1}{2} \ln \overline{\lcgnc}
\end{equation}
associated to the rescaled left Cauchy--Green tensor $\overline{\lcgnc}$ that is defined as
\begin{equation}
  \label{eq:13}
  \overline{\lcgnc} =_{\bydefinition} \frac{\lcgnc}{\detfgradnc^{\frac{2}{3}}},
\end{equation}
where $\detfgradnc =_{\bydefinition} \det \fgradnc$. The equality $\devhenckyncdelta = \devhenckync$ is discussed for example in~\cite{prusa.v.rajagopal.kr.ea:gibbs}, see also~\cite{xiao.h.bruhns.ot.ea:explicit}. The usage of the rescaled left Cauchy--Green tensor~$\overline{\lcgnc}$ is a common practice in the theory of slightly compressible solids, see for example~\cite{horgan.co.saccomandi.g:constitutive}. The rationale is that $\det \overline{\lcgnc} = 1$, hence the deformation is conveniently split into a volume-preserving and volume-changing part. We also note that if the elastic part of the response is volume-preserving, then $\det \lcgnc = 1$ and consequently $\Tr \henckync = 0$, which in this special case implies that $\henckync = \devhenckyncdelta = \devhenckync$.

Regarding the specific internal energy $\ienergy$ we therefore assume that
\begin{equation}
  \label{eq:14}
  \ienergy = \ienergy \left( \entropy, \rho, \sphenckync, \devhenckyncdelta \right),
\end{equation}
which allows us to seamlessly deal with incompressible materials as well. (If the material is homogeneous and incompressible, we can simply remove $\rho$ from~\eqref{eq:14} since the density is in this case constant. Similarly, if the response of the elastic component is modelled as a response of an incompressible material, then $\sphenckync$ is a constant and we can simply remove $\sphenckync$ form~\eqref{eq:14}. We however need to introduce the corresponding Lagrange multipliers that enforce the incompressibility constraint.)
Introducing the thermodynamic temperature $\temp$ in the standard manner as
\begin{equation}
  \label{eq:15}
  \temp =_{\bydefinition} \pd{\ienergy}{\entropy}(\entropy, \rho, \sphenckync, \devhenckyncdelta),
\end{equation}
we can rewrite~\eqref{eq:9} in terms of the Helmholtz free energy $\fenergy$,
\begin{equation}
  \label{eq:16}
  \fenergy (\temp, \rho, \sphenckync, \devhenckyncdelta)
  =_{\bydefinition}
  \left.
    \left(
      \ienergy(\entropy, \rho, \sphenckync, \devhenckyncdelta) - \temp \entropy
    \right)
  \right|_{\entropy = \entropy(\temp,\, \rho, \, \sphenckync, \, \devhenckyncdelta)},
\end{equation}
as
\begin{multline}
  \label{eq:17}
  \rho \temp \dd{\entropy}{t}
  =
  \tensordot{\cstress}{\gradsym}
  -
  \rho
  \pd{\fenergy}{\rho}(\temp, \rho, \sphenckync, \devhenckyncdelta)\dd{\rho}{t}
  -
  \rho
  \pd{\fenergy}{\sphenckync}(\temp, \rho, \sphenckync, \devhenckyncdelta)\dd{\sphenckync}{t}
  \\
  -
  \rho
  \tensordot{\traceless{\left(  \pd{\fenergy}{\devhenckyncdelta}(\temp, \rho, \sphenckync, \devhenckyncdelta) \right)}}{\dd{\devhenckyncdelta}{t}}
  -
  \divergence \hfluxc
  .
\end{multline}
(In the next to the last term we can consider only the traceless part of $\pd{\fenergy}{\devhenckyncdelta}$ since it enters the dot product with the \emph{traceless} tensor $\dd{\devhenckyncdelta}{t}$.) Finally we can split the Cauchy stress tensor into the mean normal stress $\mns=_{\bydefinition} \frac{1}{3} \Tr \cstress$ and the deviatoric part $\dcstresssymb$,
\begin{equation}
  \label{eq:18}
  \cstress = \mns \identity + \dcstresssymb, 
\end{equation}
and we get
\begin{multline}
  \label{eq:19}
  \rho \temp \dd{\entropy}{t}
  =
  \mns \divergence \vec{v}
  +
  \tensordot{\dcstresssymb}{\traceless{\gradsym}}
  +
  \rho^2
  \pd{\fenergy}{\rho}(\temp, \rho, \sphenckync, \devhenckyncdelta)\divergence \vec{v}
  \\
  -
  \rho
  \pd{\fenergy}{\sphenckync}(\temp, \rho, \sphenckync, \devhenckyncdelta)\dd{\sphenckync}{t}
  -
  \rho
  \tensordot{\traceless{\left(\pd{\fenergy}{\devhenckyncdelta}(\temp, \rho, \sphenckync, \devhenckyncdelta)\right)}}{\dd{\devhenckyncdelta}{t}}
  -
  \divergence \hfluxc
  .
\end{multline}
(We have also used the balance of mass $\dd{\rho}{t} + \rho \divergence \vec{v} = 0$.) We recall that the specific internal energy and the specific Helmholtz free energy are related by the Legendre transformation, and that the following formula holds
\begin{equation}
  \label{eq:20}
  \entropy = - \pd{\fenergy}{\temp}(\temp, \rho, \sphenckync, \devhenckyncdelta).
\end{equation}

The manipulation outlined above therefore gives us the sought evolution equation~\eqref{eq:17} for the entropy~$\entropy$. \emph{This equation is the starting point for the specification of the constitutive relations}. In the standard approach, one in principle wants to provide a constitutive relation for the Cauchy stress tensor~$\cstress$ and formulae for the time derivatives of $\sphenckync$ and $\devhenckyncdelta$ such that the right-hand side of~\eqref{eq:17} is nonnegative, which in turn guarantees that the second law of thermodynamics holds in the given material and during the given class of processes.  (See for example~\cite{dostalk.m.prusa.v.ea:on} for applications of this procedure to several well known viscoelastic rate-type models.) We however might not want to use~\eqref{eq:17} directly, we want to rewrite it in terms of Gibbs free energy $\gibbs$. 

In order to use the Gibbs free energy we need to introduce the thermodynamic pressure $\thpressure$, and the stress $\thcstressnc$, which we again split into the spherical part $\thpressurenc$ and the traceless part~$\devthcstressncdelta$,
\begin{equation}
  \label{eq:21}
  \thcstressnc =_{\bydefinition} \thpressurenc \identity + \devthcstressncdelta. 
\end{equation}
Using the definitions
\begin{subequations}
  \label{eq:22}
  \begin{align}
    \label{eq:23}
    \thpressure &=_{\bydefinition} \rho^2 \pd{\fenergy}{\rho}(\temp, \rho, \sphenckync, \devhenckyncdelta), \\
    \label{eq:24}
    \frac{\thpressurenc}{\rho} &=_{\bydefinition} \pd{\fenergy}{\sphenckync}(\temp, \rho, \sphenckync, \devhenckyncdelta)
    \\
    \label{eq:25}
    \frac{\devthcstressncdelta}{\rho}
                &=_{\bydefinition}
                  \traceless{ \left( \pd{\fenergy}{\devhenckyncdelta}(\temp, \rho, \sphenckync, \devhenckyncdelta) \right)},
  \end{align}
\end{subequations}
we can further rewrite~\eqref{eq:17} in such a way that it is ready for the use of the specific Gibbs free energy $\gibbs$. (The definition~\eqref{eq:23} is the standard definition of the thermodynamic pressure as it is used in the classical thermodynamics of compressible fluids.) If we further introduce the notation
\begin{subequations}
  \label{eq:26}
  \begin{align}
    \thpressurencrho
    &=_{\bydefinition}
      \frac{\thpressurenc}{\rho}
    \\
    \label{eq:27}
    \devthcstressncrho
    &=_{\bydefinition}
      \frac{\devthcstressncdelta}{\rho},
  \end{align}
\end{subequations}
the specific Gibbs free energy is then defined as
\begin{multline}
  \label{eq:28}
  \gibbs \left( \temp, \thpressure, \thpressurencrho, \devthcstressncrho \right)
  \\
  =_{\bydefinition}
  \left.
    \left(
      \fenergy(\temp, \rho, \sphenckync, \devhenckyncdelta)
      +
      \frac{\thpressure}{\rho}
      -
      \thpressurencrho \sphenckync
      -
      \tensordot{\devthcstressncrho}{\devhenckyncdelta}
    \right)
  \right|_{
    \rho = \rho \left( \temp,\, \thpressure,\, \thpressurencrho,\, \devthcstressncrho  \right),\, \cdots
  }
  .
\end{multline}

We note that the transition to the Gibbs free energy is in fact done by another Legendre transformation, and that the use of Legendre transformation implies that
\begin{subequations}
  \label{eq:29}
  \begin{align}
    \label{eq:30}
    \entropy &= - \pd{\gibbs}{\temp} \left( \temp, \thpressure, \thpressurencrho, \devthcstressncrho \right), \\
    \label{eq:31}
    \frac{1}{\rho} &= \pd{\gibbs}{\thpressure} \left( \temp, \thpressure, \thpressurencrho, \devthcstressncrho \right), \\
    \label{eq:32}
    \sphenckync &= - \pd{\gibbs}{\thpressurencrho} \left( \temp, \thpressure, \thpressurencrho, \devthcstressncrho \right).
    \\
    \label{eq:33}
    \devhenckyncdelta &= - \traceless{\left( \pd{\gibbs}{\devthcstressncrho} \left( \temp, \thpressure, \thpressurencrho, \devthcstressncrho \right) \right)}.
  \end{align}
\end{subequations}
If we use definitions~\eqref{eq:22}, we see that~\eqref{eq:19} can be rewritten as
\begin{equation}
  \label{eq:34}
  \rho \temp \dd{\entropy}{t}
  =
  -
  \frac{\mns + \thpressure}{\rho} \dd{\rho}{t}
  +
  \tensordot{\dcstresssymb}{\traceless{\gradsym}}
  -
  \thpressurenc
  \dd{\sphenckync}{t}
  -
  \tensordot{\devthcstressncdelta}{\dd{\devhenckyncdelta}{t}}
  -
  \divergence \hfluxc
  ,
\end{equation}
which upon using formulae~\eqref{eq:29} yields
\begin{multline}
  \label{eq:35}
  \rho \temp \dd{\entropy}{t}
  =
  -
  \rho
  \left(\mns + \thpressure \right)
  \dd{}{t}
  \left(
    \pd{\gibbs}{\thpressure} \left( \temp, \thpressure, \thpressurencrho, \devthcstressncrho \right)
  \right)
  +
  \tensordot{\dcstresssymb}{\traceless{\gradsym}}
  +
  \thpressurenc
  \dd{}{t}
  \left(
    \pd{\gibbs}{\thpressurencrho} \left( \temp, \thpressure, \thpressurencrho, \devthcstressncrho \right)
  \right)
  \\
  +
  \tensordot{\devthcstressncdelta}
  {
    \dd{}{t}
    \left(
      \traceless{
        \left(
          \pd{\gibbs}{\devthcstressncrho} \left( \temp, \thpressure, \thpressurencrho, \devthcstressncrho \right)
        \right)
      }
    \right)
  }
  -
  \divergence \hfluxc
  .
\end{multline}
This is a general evolution equation for the specific entropy $\entropy$.

\subsection{Entropy evolution equation -- incompressible fluids}
\label{sec:entr-evol-equat-1}

Let us now manipulate~\eqref{eq:34} into a form convenient for the ongoing investigation of the constitutive relations. \emph{For the sake of simplicity we now restrict ourselves to incompressible fluids, which means that the density is a constant and it no longer enters the formula for the Gibbs free energy. Further we will not split the elastic response into the volume-preserving and volume-changing part, that is we use the full Hencky strain tensor $\henckync$ instead of the pair $\devhenckyncdelta$ and $\sphenckync$, and similarly for the reduced stress $\thcstressncrho$.} In this case the Gibbs free energy is therefore given as
\begin{equation}
  \label{eq:36}
  \gibbs = \gibbs \left( \temp, \thcstressncrho \right),
\end{equation}
and the evolution equation for the entropy~\eqref{eq:35} reduces to
\begin{equation}
  \label{eq:37}
  \rho \temp \dd{\entropy}{t}
  =
  \tensordot{\dcstresssymb}{\traceless{\gradsym}}
  \\
  +
  \tensordot{\thcstressnc}
  {
    \dd{}{t}
    \left[
      \pd{\gibbs}{\thcstressncrho} \left( \temp, \thcstressncrho \right)
    \right]
  }
  -
  \divergence \hfluxc
  ,
\end{equation}
and relations~\eqref{eq:29} read
\begin{subequations}
  \begin{align}
    \label{eq:38}
    \entropy &= - \pd{\gibbs}{\temp} \left( \temp, \thcstressncrho \right), \\
    \label{eq:39}
    \henckync &= - \pd{\gibbs}{\thcstressncrho} \left( \temp, \thcstressncrho \right).
  \end{align}
\end{subequations}

We note that if we deal with an incompressible substance, then the Cauchy stress tensor is split to the spherical part $-p \identity$ and the deviatoric part $\traceless{\cstress}$,
\begin{equation}
  \label{eq:40}
  \cstress = -p\identity + \traceless{\cstress},
\end{equation}
and that the pressure $p$---which can be now understood as the Lagrange multiplier enforcing the incompressibility constraint---constitutes a new unknown field to be solved for. Namely, it is not given by an equation of state as a function of density, temperature and other variables. Furthermore the incompressibility constraint $\divergence \vec{v} = 0$ implies that $\traceless \gradsym = \gradsym$.

Finally, we note that if we deal with isotropic elastic response, which is the case in the rest of the paper, then~\eqref{eq:39} in virtue of the standard representation theorem for isotropic functions, see for example~\cite{truesdell.c.noll.w:non-linear*1}, implies that the tensors $\henckync$ and $\thcstressncrho$ commute, that is $\henckync\thcstressncrho = \thcstressncrho\henckync$.

\subsubsection{Manipulations with entropy production -- direct use of Gibbs free energy}
\label{sec:manip-with-entr}

If we further \emph{assume that the Gibbs free energy has the property that the mixed derivative vanishes}, that is if
\begin{equation}
  \label{eq:41}
  \pd{^2\gibbs}{\temp \partial \thcstressncrho} = 0,
\end{equation}
then~\eqref{eq:35} reduces to\footnote{The notation might be slightly ambiguous here. Using the index notation we have
  \begin{equation*}
    \tensordot{\thcstressncrho}{ \ppd{\gibbs}{\thcstressncrho} \left( \temp, \thcstressncrho \right) \dd{\thcstressncrho}{t}}
    =
    \tensor{{\thcstressncrhoc}}{_{ij}}
    \pd{^2\gibbs}{\tensor{{\thcstressncrhoc}}{_{ij}} \partial \tensor{{\thcstressncrhoc}}{_{mn}}}
    \dd{\tensor{{\thcstressncrhoc}}{_{mn}}}{t}
    ,
  \end{equation*}
  where we have used the summation convention.
}%
\begin{equation}
  \label{eq:42}
  \rho \temp \dd{\entropy}{t}
  =
  \tensordot{\traceless{\cstress}}{\traceless{\gradsym}}
  +
  \rho
  \tensordot{\thcstressncrho}{ \ppd{\gibbs}{\thcstressncrho} \left( \temp, \thcstressncrho \right) \dd{\thcstressncrho}{t}}
  -
  \divergence \hfluxc
  .
\end{equation}
The reduced stress tensor $\thcstressncrho$ is a second order tensor in the current configuration that transforms \emph{normals} to infinitesimal surfaces to traction \emph{vectors}, hence we see that it is a tensor of type $\binom{2}{0}$. If we need to calculate the time derivative of such a tensor, then the natural concept is the upper convected derivative,
\begin{equation}
  \label{eq:43}
  \fid{\overline{\thcstressncrho}}
  =_{\bydefinition}
  \dd{\thcstressncrho}{t}
  -
  \gradvl
  \thcstressncrho
  -
  \thcstressncrho
  \transpose{\gradvl}
  .
\end{equation}
(See for example~\cite{stumpf.h.hoppe.u:application} for geometrical underpinnings of the concept of upper convected derivative and its link to the concept of Lie derivative.) Using the definition of the upper convected derivative, we see that~\eqref{eq:42} can be rewritten as\footnote{For the sake of clarity we can resort to the index notation. We set
  \begin{equation*}
    \tensor{
      \left[
        \thcstressncrho
        \ppd{\gibbs}{\thcstressncrho}
        \thcstressncrho
      \right]
    }{_{mp}}
    =
    \tensor{{\thcstressncrhoc}}{_{ij}}
    \pd{^2\gibbs}{\tensor{{\thcstressncrhoc}}{_{ij}} \partial\tensor{{\thcstressncrhoc}}{_{mn}}}
    \tensor{{\thcstressncrhoc}}{_{np}}
    ,
  \end{equation*}
  where we use the summation convention. Since $\thcstressncrho$ is a symmetric tensor that commutes with $\pd{\gibbs}{\thcstressncrho}$, we see that
  $
  \thcstressncrho
  \ppd{\gibbs}{\thcstressncrho}
  \thcstressncrho
  $
  is also a symmetric tensor.
}
\begin{equation}
  \label{eq:44}
  \rho \temp \dd{\entropy}{t}
  =
  \tensordot{
    \left\{
      \traceless{\cstress}
      +
      2
      \rho
      \traceless{
        \left[
          \thcstressncrho
          \ppd{\gibbs}{\thcstressncrho}
          \left(
            \temp, \thcstressncrho
          \right)
          \thcstressncrho
        \right]
      }
    \right\}
  }
  {\traceless{\gradsym}}
  +
  \rho
  \tensordot{
    \thcstressncrho
  }
  {
    \ppd{\gibbs}{\thcstressncrho}
    \left(
      \temp, \thcstressncrho
    \right)
    \fid{\overline{\thcstressncrho}}
  }
  -
  \divergence \hfluxc
  .
\end{equation}
(We are exploiting the fact that $\thcstressncrho$ commutes with $\pd{\gibbs}{\thcstressncrho}$ and the cyclic property of the trace. The fact that $\thcstressncrho$ commutes with $\pd{\gibbs}{\thcstressncrho}$ follows again from the representation theorems for isotropic functions.) Finally, the standard manipulation of the heat flux term yields
\begin{multline}
  \label{eq:45}
  \rho \dd{\entropy}{t}
  +
  \divergence
  \left(
    \frac{\hfluxc}{\temp}
  \right)
  \\
  =
  \frac{1}{\temp}
  \left\{
    \tensordot{
      \left\{
        \traceless{\cstress}
        +
        2
        \rho
        \traceless{
          \left[
            \thcstressncrho
            \ppd{\gibbs}{\thcstressncrho}
            \left(
              \temp, \thcstressncrho
            \right)
            \thcstressncrho
          \right]
        }
      \right\}
    }
    {\traceless{\gradsym}}
    +
    \rho
    \tensordot{
      \thcstressncrho
    }
    {
      \ppd{\gibbs}{\thcstressncrho}
      \left(
        \temp, \thcstressncrho
      \right)
      \fid{\overline{\thcstressncrho}}
    }
  \right\}
  -
  \frac{\vectordot{\hfluxc}{\nabla \temp}}{\temp^2}
  ,
\end{multline}
which is the formula used by~\cite{rajagopal.kr.srinivasa.ar:gibbs-potential-based}.

\subsubsection{Manipulations with entropy production -- indirect use of Gibbs free energy}
\label{sec:manip-with-entr-1}

Note however that~\eqref{eq:34} \emph{can be also manipulated differently}, and in this case \emph{no further structural assumptions on the specific Gibbs free energy are necessary}. In the incompressible setting~\eqref{eq:34} reads
\begin{equation}
  \label{eq:46}
  \rho \temp \dd{\entropy}{t}
  =
  \tensordot{\dcstresssymb}{\traceless{\gradsym}}
  -
  \tensordot{\thcstressnc}{\dd{\henckync}{t}}
  -
  \divergence \hfluxc
  .
\end{equation}
Unlike in the previous manipulation we keep the Hencky strain tensor~$\henckync$ in the equation, but we replace its material time derivative by the upper convected derivative,
\begin{equation}
  \label{eq:47}
  \fid{\overline{\henckync}}
  =
  \dd{\henckync}{t}
  -
  \gradvl
  \henckync
  -
  \henckync
  \transpose{\gradvl}
  .
\end{equation}
(The rationale for this manipulation is the same as in the case of the stress $\thcstressncrho$.) Next, using the standard manipulation of the heat flux $\hfluxc$, we see that~\eqref{eq:34} can be finally rewritten as
\begin{equation}
  \label{eq:48}
  \rho \dd{\entropy}{t}
  +
  \divergence
  \left(
    \frac{\hfluxc}{\temp}
  \right)
  =
  \frac{1}{\temp}
  \left\{
    \tensordot{
      \left\{
        \traceless{
          \cstress
        }
        -
        2
        \rho
        \traceless{
          \left[
            \thcstressncrho
            \henckync
          \right]
        }
      \right\}
    }
    {\traceless{\gradsym}}
    +
    \rho
    \tensordot{
      \thcstressncrho
    }
    {
      \fid{\overline{\henckync}}
    }
  \right\}
  -
  \frac{\vectordot{\hfluxc}{\nabla \temp}}{\temp^2}
  .
\end{equation}

\subsubsection{Constitutive relations}
\label{sec:const-relat}

Either~\eqref{eq:48} or~\eqref{eq:45} can \emph{serve as a starting point for the specification of constitutive relations}. Once we have postulated a formula for the Gibbs free energy, which is tantamount to the specification of the \emph{energy storage mechanism} in the material of interest, we can proceed with the specification of the \emph{entropy production mechanisms}. This boils down to the specification of the formula for the right-hand side of the generic entropy evolution equation that has the structure
\begin{equation}
  \label{eq:144}
  \rho \dd{\entropy}{t}
  +
  \divergence
  \entfluxc
  =
  \entprodc
  ,
\end{equation}
where $\entfluxc$ denotes the entropy flux and $\entprodc$ denotes the entropy production. Once the required entropy production---the right-hand side or entropy evolution equation---is specified, we compare it with the entropy production implied by the choice of the Gibbs free energy, that is with the genuine right hand-side of~\eqref{eq:48} or~\eqref{eq:45}, which will allow us to fix the constitutive equations for $\cstress$ and $\thcstress$. Examples of this procedure are given below in Section~\ref{sec:rederivation}. 

A more sophisticated version of this approach is based on the assumption of the maximisation of the entropy production, see~\cite{rajagopal.kr.srinivasa.ar:on*7}. This approach is worked out in Section~\ref{sec:entr-prod-maxim}.

\subsubsection{Temperature evolution equation}
\label{sec:temp-evol-equat}
Finally, let us remark that once the specific Gibbs free energy is given, one can then exploit the relation between the entropy and the specific Gibbs free energy, see~\eqref{eq:30}, and convert~\eqref{eq:45} or~\eqref{eq:48} respectively to an evolution equation for the temperature, indeed
\begin{equation}
  \label{eq:49}
  \dd{\entropy}{t}
  =
  -
  \dd{}{t}
  \left(
    \pd{\gibbs}{\temp} \left( \temp, \thcstressncrho \right)
  \right)
  =
  -
  \ppd{\gibbs}{\temp} \left( \temp, \thcstressncrho \right)
  \dd{\temp}{t}
  -
  \tensordot{
    \pd{^2\gibbs}{\temp \partial \thcstressncrho} \left( \temp, \thcstressncrho \right)
  }
  {
    \dd{\thcstressnc}{t}
  }
  .
\end{equation}
Note that if we use the assumption~\eqref{eq:41} that the mixed derivative of the specific Gibbs free energy vanishes, then the formula~\eqref{eq:49} simplifies to
$
\dd{\entropy}{t}
=
-
\ppd{\gibbs}{\temp} \left( \temp, \thcstressncrho \right)
\dd{\temp}{t}
$. If it is not the case, we need to consider the term $\tensordot{\pd{^2\gibbs}{\temp \partial \thcstressncrho} \left( \temp, \thcstressncrho \right)}{\dd{\thcstressnc}{t}}$  as well. The explicit formula for this term however depends on the rate-type constitutive relation for the stress tensor $\thcstressnc$.

\section{Example -- Giesekus/Oldroyd-B viscoelastic rate-type fluid in the approach based on the Gibbs free energy}
\label{sec:gies-visc-rate}

We now use the approach based on assuming the Gibbs free energy and appropriate entropy producing mechanisms to develop the popular incompressible rate-type viscoelastic fluid model due to Giesekus, see~\cite{giesekus.h:simple}. (Note that this model reduces, for a special choice of parameters, to the Oldroyd-B model, see~\cite{oldroyd.jg:on}.) In this case we have the luxury of writing down simple explicit formulas both for the Helmholtz free energy and the Gibbs free energy as well as for the entropy production. This makes the model ideal for clarifying the key ideas.

\subsection{Giesekus/Oldroyd-B model via the Helmholtz free energy -- classical approach}
\label{sec:giesekus-model-its}
Let us first summarise basic facts regarding the classical incompressible viscoelastic rate-type model derived by Giesekus for heat conducting fluids with constant specific heat at constant volume~$\cheatvolref$. (Note that the original derivation, see~\cite{giesekus.h:simple}, is done in the purely mechanical context; no temperature evolution equation is given. The corresponding temperature evolution equation is derived for example in~\cite{wapperom.p.hulsen.ma:thermodynamics}.) The governing equations for the mechanical quantities are in this case
\begin{subequations}
  \label{eq:giesekus-governing-equations}
  \begin{align}
    \label{eq:giesekus-incompressibility-condition}
    \divergence \vec{v} 
    &=
      0,
    \\
    \label{eq:giesekus-linear-momentum-balance}
    \rho \dd{\vec{v}}{t}
    &=
      \divergence \cstress + \rho \vec{b},
    \\
    \label{eq:giesekus-lcg-evolution-equation}
    \nu_1 \fid{\overline{\lcgnc}}
    &=
      -
      \mu(\theta)
      \left[
      \alpha \lcgnc^2 + (1 - 2 \alpha) \lcgnc - (1 - \alpha) \identity
      \right]
      ,
  \end{align}
  and the temperature evolution equation takes the form
  \begin{multline}
    \label{eq:giesekus-temperature-evolution-equation}
    \rho \cheatvolref
    \dd{\temp}{t}
    =
    \divergence \left( \kappa \nabla \temp \right)
    +
    2 \nu \tensordot{\traceless{\gradsym}}{\traceless{\gradsym}}
    +
    \frac{\mu(\theta)^2}{2 \nu_1} \Tr 
    \left[ 
      \alpha \lcgnc^2 + (1 - 3 \alpha) \lcgnc + (1 - \alpha) \inverse{\lcgnc} + (3 \alpha - 2) \identity
    \right]
    \\
    +
    \temp
    \dd{\mu}{\temp}
    \tensordot{\traceless{\gradsym}}{\traceless{\left(\lcgnc\right)}}
    -
    \temp
    \frac{\mu(\theta)}{2 \nu_1}
    \dd{\mu}{\temp}
    \Tr
    \left[
      \lcgnc
      +
      \inverse{\lcgnc}
      -
      2 \identity
    \right]
    .
  \end{multline}
  The Cauchy stress tensor $\cstress$ is given by the formulae
  \begin{equation}
    \label{eq:56}
    \cstress = \mns \identity + \traceless{\cstress},
    \qquad
    \traceless{\cstress} = 2 \nu \traceless{\gradsym} + \mu(\theta) \traceless{(\lcgnc)},
  \end{equation}
\end{subequations}
where $\alpha \in [0,1]$ is a model parameter, $\nu$, $\nu_1$ are positive material constants or functions that can eventually depend on temperature, and $\mu$ is a material constant or a function \emph{proportional to the thermodynamic temperature}; $\mu$ plays the role of the shear modulus for the elastic response, while the combination $\frac{\nu_1}{\mu(\temp)}$ plays the role of relaxation time, see~\eqref{eq:giesekus-lcg-evolution-equation}. (More complex temperature dependence of $\mu$ is not allowed. If we were dealing a more complex function than the constant/linear function of temperature, then we would get a model with specific heat at constant volume that depends on $\lcgnc$, see for example comments in~\cite{hron.j.milos.v.ea:on}. While such models might be useful, we do not consider them in the present contribution.) As we shall see later the material constant/function $\mu$ enters, unlike $\nu$ and $\nu_1$, the formula for the specific Helmholtz free energy $\fenergy$, whose derivatives with respect to~$\temp$ and~$\lcgnc$ determine the properties of the material. Therefore we shall explicitly write $\mu(\temp)$ in order to emphasise the temperature dependence of $\mu$. On the other hand, for $\nu$ and $\nu_1$ the temperature dependence is not---from the theoretical point of view---as important as in the case of $\mu$, hence we shall simply write $\nu$ and $\nu_1$ instead of $\nu(\temp)$ and $\nu_1(\temp)$. Finally, the symbol $\vec{b}$ denotes the body force.

A few remarks concerning the system~\eqref{eq:giesekus-governing-equations} are in order. First, we note that the right-hand side of~\eqref{eq:giesekus-lcg-evolution-equation} can be rewritten as
\begin{equation}
  \label{eq:57}
  -
  \mu(\theta)
  \left[
    \alpha \lcgnc^2 + (1 - 2 \alpha) \lcgnc - (1 - \alpha) \identity
  \right]
  =
  -
  \mu(\theta)
  \left[
    \left(
      \lcgnc - \identity
    \right)
    +
    \alpha
    \left(
      \lcgnc - \identity
    \right)^2
  \right]
  .
\end{equation}
In this form it is straightforward to see that if $\lcgnc = \identity$, then the right-hand side of~\eqref{eq:giesekus-lcg-evolution-equation} vanishes. (This is a useful information if one is interested in steady states predicted by~\eqref{eq:giesekus-governing-equations} in the case of zero external forcing.) Second, the last term in the temperature evolution equation~\eqref{eq:giesekus-temperature-evolution-equation} can be rewritten as
\begin{multline}
  \label{eq:58}
  \frac{\mu(\theta)^2}{2 \nu_1}
  \Tr 
  \left[ 
    \alpha \lcgnc^2 + (1 - 3 \alpha) \lcgnc + (1 - \alpha) \inverse{\lcgnc} + (3 \alpha - 2) \identity
  \right]
  \\
  =
  \frac{\mu(\theta)^2}{2 \nu_1}
  \Tr 
  \left[ 
    \inverse{\lcgnc}
    \left(
      \left(
        1 - \alpha
      \right)
      \identity
      +
      \alpha
      \lcgnc
    \right)
    \left(
      \lcgnc - \identity
    \right)^2
  \right]
  .
\end{multline}
This manipulation shows that the corresponding term is for $\alpha \in [0,1]$ nonnegative, and that it vanishes for $\lcgnc = \identity$, which a prospective steady state in the system without external forcing. Third, we recall that if we set $\alpha=0$, then we obtain the standard Oldroyd-B model, see~\cite{oldroyd.jg:on}.

The specific Helmholtz free energy $\fenergy$ is in this case of model~\eqref{eq:giesekus-governing-equations} known to be given by the formula
\begin{subequations}
  \label{eq:59}
  \begin{align}
    \label{eq:60}
    \fenergy(\temp, \lcgnc) &=_{\bydefinition} \fenergy_{\mathrm{thermal}}(\temp) + \frac{\mu(\temp)}{2\rho} \left( \Tr \lcgnc - 3 - \ln \det \lcgnc \right), \\
    \label{eq:61}
    \fenergy_{\mathrm{thermal}}(\temp) &=_{\bydefinition} - \cheatvolref \temp \left( \ln\frac{\temp}{\temp_{\reference}} - 1 \right).
  \end{align}
\end{subequations}
and the entropy production reads
\begin{equation}
  \label{eq:giesekus-entropy-production}
  \entprodc
  =
  \frac{1}{\temp}
  \left\{
    2 \nu \tensordot{\traceless{\gradsym}}{\traceless{\gradsym}}
    +
    \frac{\mu(\theta)^2}{2 \nu_1} \Tr 
    \left[ 
      \alpha \lcgnc^2 + (1 - 3 \alpha) \lcgnc + (1 - \alpha) \inverse{\lcgnc} + (3 \alpha - 2) \identity
    \right]
  \right\}
  +
  \kappa
  \frac{
    \vectordot{\nabla \temp}{\nabla \temp}
  }
  {\temp^2}
  .
\end{equation}

Note that~\eqref{eq:58} shows that the entropy production is nonnegative. We recall that the Hencky strain tensor associated to the elastic response $\henckync$ is given by~\eqref{eq:10}, hence we have
\begin{equation}
  \label{eq:62}
  \lcgnc = \exponential{2 \henckync},
\end{equation}
and the formula for the Helmholtz free energy $\fenergy$ in terms of $\henckync$ reads
\begin{equation}
  \label{eq:63}
  \fenergy(\temp, \henckync) =_{\bydefinition} \fenergy_{\mathrm{thermal}}(\temp) + \frac{\mu(\temp)}{2\rho} \left( \Tr \exponential{2 \henckync} - 3 - \ln \det \exponential{2 \henckync} \right).   
\end{equation}
Using the fact that the Hencky strain tensor is represented by a symmetric matrix and the standard identity $\det \exponential{\generictensor} = \exponential{\Tr \generictensor}$, we see that~\eqref{eq:63} can be rewritten as
\begin{equation}
  \label{eq:64}
  \fenergy(\temp, \henckync) =_{\bydefinition} \fenergy_{\mathrm{thermal}}(\temp) + \frac{\mu(\temp)}{2\rho} \left( \Tr \exponential{2 \henckync} - 3 - 2 \Tr \henckync \right).   
\end{equation}
The stress $\thcstressnc$ is then given via the definition $\frac{\thcstressnc}{\rho} = \pd{\fenergy}{\henckync}(\temp, \henckync)$, see~\eqref{eq:22}, which in our case yields
\begin{equation}
  \label{eq:65}
  \thcstressnc
  =
  \mu (\temp)
  \left(
    \exponential{2 \henckync}
    -
    \identity
  \right)
  ,
\end{equation}
where we have used Lemma~\ref{lemma:differentiation}. Equation~\eqref{eq:65} can be solved explicitly for the Hencky strain $\henckync$,
\begin{equation}
  \label{eq:66}
  \henckync
  =
  \frac{1}{2}
  \ln
  \left(
    \frac{\thcstressnc}{\mu(\temp)}
    +
    \identity
  \right)
  .
\end{equation}
Note that if we rewrite \eqref{eq:65} in terms of~$\lcgnc$, then we get
\begin{equation}
  \label{eq:67}
  \thcstressnc = \mu(\temp) \left(\lcgnc - \identity \right). 
\end{equation}
Having obtained formula~\eqref{eq:66} we can use the definition of the specific Gibbs free energy $\gibbs = \fenergy - \tensordot{\frac{\thcstressnc}{\rho}}{\henckync}$, and we can  write down an explicit formula for the Gibbs free energy corresponding to the Helmholtz free energy introduced in~\eqref{eq:59},
\begin{equation}
  \label{eq:68}
  \gibbs(\temp, \thcstressnc)
  =
  \gibbs_{\mathrm{thermal}}(\temp)
  +
  \frac{\mu(\temp)}{2 \rho}
  \left(
    \Tr
    \left(
      \frac{\thcstressnc}{\mu(\temp)}
    \right)
    -
    \Tr
    \ln
    \left(
      \frac{\thcstressnc}{\mu(\temp)}
      +
      \identity
    \right)
  \right)
  -
  \frac{1}{2 \rho}
  \tensordot{
    \thcstressnc
  }
  {
    \ln
    \left(
      \frac{\thcstressnc}{\mu(\temp)}
      +
      \identity
    \right)
  }
  ,
\end{equation}
which is easy to convert to the final form
\begin{equation}
  \label{eq:69}
  \gibbs(\temp, \thcstressncrho)
  =
  \gibbs_{\mathrm{thermal}}(\temp)
  +
  \frac{\mu(\temp)}{2 \rho}
  \Tr
  \left(
    \frac{\rho}{\mu(\temp)} \thcstressncrho
    -
    \left[
      \frac{\rho}{\mu(\temp)} \thcstressncrho
      +
      \identity
    \right]
    \ln
    \left[
      \frac{\rho}{\mu(\temp)} \thcstressncrho
      +
      \identity
    \right]
  \right)
  .
\end{equation}
Since the Legendre transformation from the Helmholtz free energy to the Gibbs free energy is done with respect to the deformation and the temperature variable is kept intact, we see that the function $\gibbs_{\mathrm{thermal}}(\temp)$ is identical to the function $\fenergy_{\mathrm{thermal}}(\temp)$.

We note that if we use spectral representation of~$\thcstressncrho$, then~\eqref{eq:68} can be rewritten as
\begin{equation}
  \label{eq:70}
  \gibbs(\temp, \thcstressncrho)
  =
  \gibbs_{\mathrm{thermal}}(\temp)
  +
  \frac{\mu(\temp)}{2 \rho}
  \sum_{i=1}^3
  \left(
    \frac{\rho}{\mu(\temp)} \lambda_i
    -
    \left[
      \frac{\rho}{\mu(\temp)} \lambda_i
      +
      1
    \right]
    \ln
    \left[
      \frac{\rho}{\mu(\temp)} \lambda_i
      +
      1
    \right]
  \right)
  ,
\end{equation}
where $\left\{\lambda_i\right\}_{i=1}^3$ are eigenvalues of $\thcstressncrho$. The function $f(x) = x - (x+1) \ln (x+1)$ that appears on the right-hand side of~\eqref{eq:70} is a nonpositive \emph{concave} function that vanishes if and only if $x=0$. Furthermore, by appealing to~\eqref{eq:69} we see that the Gibbs free energy becomes infinite once the quantity 
$
\frac{\thcstressnc}{\mu(\temp)}
+
\identity
$
vanishes. This means that the quantity
$
\frac{\thcstressnc}{\mu(\temp)}
+
\identity
$
remains a positive definite matrix. (This is equivalent to the positive definiteness of $\lcgnc$ in the classical approach.) Finally, we observe that the Gibbs free energy vanishes for $\thcstressnc = \tensorzero$, which corresponds to the fact that the Helholtz free energy vanishes for~$\lcgnc = \identity$.

\subsection{Approach based on Gibbs free energy}
\label{sec:rederivation}
Let us now assume that the Gibbs free energy $\gibbs$ is given by~\eqref{eq:69}, and let us \emph{rederive} the governing equations~\eqref{eq:giesekus-governing-equations} within the approach based on the Gibbs free energy. (Instead of the Helmholtz free energy we start with the Gibbs free energy, and we observe what steps need to be taken, if we want to rederive the Giesekus model.) Using the standard matrix calculus we can observe that
\begin{equation}
  \label{eq:71}
  \pd{\gibbs}{\thcstressncrho}(\temp, \thcstressncrho)
  =
  -
  \frac{1}{2}
  \ln
  \left(
    \frac{\rho}{\mu(\temp)}\thcstressncrho
    +
    \identity
  \right)
  ,
\end{equation}
which is the expected result, since the use of Legendre transformation implies that $\henckync = - \pd{\gibbs}{\thcstressncrho}(\temp, \thcstressncrho)$. Compare~\eqref{eq:71} and~\eqref{eq:66}.

Now we need to exploit the evolution equation for the entropy. In particular, we need to propose an evolution equation for $\thcstressnc$ and a constitutive relation for $\cstress$ in such a way that the entropy production is nonnegative. (The thermal part---the heat flux---can be manipulated in the standard manner.)
\begin{equation}
  \label{eq:72}
  \rho \dd{\entropy}{t}
  +
  \divergence
  \left(
    \frac{\hfluxc}{\temp}
  \right)
  =
  \frac{1}{\temp}
  \left\{
    \tensordot{\traceless{\cstress}}{\traceless{\gradsym}}
    +
    \tensordot{
      \thcstressnc
    }
    {
      \dd{\henckync}{t}
    }
  \right\}
  -
  \frac{\vectordot{\hfluxc}{\nabla \temp}}{\temp^2}
  .
\end{equation}
The term
$
\tensordot{
  \thcstressnc
}
{
  \dd{\henckync}{t}
}
$
can be expressed as
\begin{multline}
  \label{eq:73}
  \tensordot{
    \thcstressnc
  }
  {
    \dd{\henckync}{t}
  }
  =
  \tensordot{
    \thcstressnc
  }
  {
    \exponential{-2 \henckync}
    \exponential{2 \henckync}
    \dd{\henckync}{t}
  }
  =
  \frac{1}{2}
  \tensordot{
    \thcstressnc
  }
  {
    \exponential{-2 \henckync}
    \dd{}{t}
    \exponential{2 \henckync}
  }
  \\
  =
  \frac{1}{2}
  \tensordot{
    \thcstressnc
    \exponential{-2 \henckync}
  }
  {
    \left(
      \fid{
        \overline{
          \exponential{2 \henckync}
        }
      }
      +
      \gradvl
      \exponential{2 \henckync}
      +
      \exponential{2 \henckync}
      \transpose{\gradvl}
    \right)
  }
  =
  \tensordot{\thcstressnc}{\gradsym}
  +
  \frac{1}{2}
  \tensordot{
    \thcstressnc
    \exponential{-2 \henckync}
  }
  {
    \fid{
      \overline{
        \exponential{2 \henckync}
      }
    }
  }
  ,
\end{multline}
while the motivation for this manipulation is to get the term $\tensordot{\thcstressnc}{\gradsym}$, which can be conveniently manipulated in~\eqref{eq:72}. In~\eqref{eq:73} we are using the cyclic property of the trace and the fact that $\thcstressnc$ and $\henckync$ commute, which implies that $\thcstressnc$ and $\exponential{-2\henckync}$ commute as well. If we substitute~\eqref{eq:73} into~\eqref{eq:72}, we get
\begin{equation}
  \label{eq:74}
  \rho \dd{\entropy}{t}
  +
  \divergence
  \left(
    \frac{\hfluxc}{\temp}
  \right)
  =
  \frac{1}{\temp}
  \left\{
    \tensordot{\left[ \traceless{\cstress} - \devthcstressncdelta \right]}{\traceless{\gradsym}}
    -
    \frac{1}{2}
    \tensordot{
      \thcstressnc
      \exponential{-2 \henckync}
    }
    {
      \fid{
        \overline{
          \exponential{2 \henckync}
        }
      }
    }
  \right\}
  -
  \frac{\vectordot{\hfluxc}{\nabla \temp}}{\temp^2}
  .
\end{equation}

The primitive quantity in the approach based on the Gibbs free energy is the reduced stress tensor $\thcstressncrho = \frac{\thcstressnc}{\rho}$, and we need to find an evolution equation for this quantity. (Or for that matters an evolution equation for $\thcstressnc$ since these quantities differ only by the multiplication by the density $\rho$, which is in our case a constant.) We recall that the evolution equation for $\thcstressnc$ and the constitutive relation for $\cstress$ must be chosen in such a way that the corresponding terms on the right-hand side of~\eqref{eq:72} are nonnegative.

\subsubsection{Oldroyd-B model}
\label{sec:oldroyd-b-model}

Before we discuss the Giesekus model, we focus on the case $\alpha=0$ which leads to the Oldroyd-B model. Regarding the entropy production terms in~\eqref{eq:74} the following simple choice 
\begin{subequations}
  \label{eq:75}
  \begin{align}
    \label{eq:76}
    \thcstressnc &=_{\bydefinition}
                   -
                   \nu_1
                   \fid{
                   \overline{
                   \exponential{2 \henckync}
                   }
                   },
    \\
    \label{eq:77}
    \traceless{\cstress} &=_{\bydefinition} \devthcstressncdelta + 2 \nu \traceless{\gradsym},
  \end{align}
\end{subequations}
is of particular interest. If we use~\eqref{eq:75} in~\eqref{eq:74}, then we get
\begin{equation}
  \label{eq:78}
  \rho \dd{\entropy}{t}
  +
  \divergence
  \left(
    \frac{\hfluxc}{\temp}
  \right)
  =
  \frac{1}{\temp}
  \left\{
    2 \nu
    \tensordot{\traceless{\gradsym}}{\traceless{\gradsym}}
    +
    \frac{\nu_1}{2}
    \tensordot{
      \fid{
        \overline{
          \exponential{2 \henckync}
        }
      }
      \exponential{-2 \henckync}
    }
    {
      \fid{
        \overline{
          \exponential{2 \henckync}
        }
      }
    }
  \right\}
  -
  \frac{\vectordot{\hfluxc}{\nabla \temp}}{\temp^2}
  ,
\end{equation}
hence the entropy production in mechanical processes in nonnegative as required. Indeed, the term in the entropy production can be rewritten as
\begin{multline}
  \label{eq:79}
  \tensordot{
    \fid{
      \overline{
        \exponential{2 \henckync}
      }
    }
    \exponential{-2 \henckync}
  }
  {
    \fid{
      \overline{
        \exponential{2 \henckync}
      }
    }
  }
  =
  \Tr
  \left[
    \fid{
      \overline{
        \exponential{2 \henckync}
      }
    }
    \exponential{-2 \henckync}
    \fid{
      \overline{
        \exponential{2 \henckync}
      }
    }
  \right]
  \\
  =
  \Tr
  \left[
    \fid{
      \overline{
        \exponential{2 \henckync}
      }
    }
    \exponential{-\henckync}
    \exponential{-\henckync}
    \fid{
      \overline{
        \exponential{2 \henckync}
      }
    }
  \right]
  =
  \tensordot{
    \fid{
      \overline{
        \exponential{2 \henckync}
      }
    }
    \exponential{-\henckync}
  }
  {
    \fid{
      \overline{
        \exponential{2 \henckync}
      }
    }
    \exponential{-\henckync}
  }
  ,
\end{multline}
or as
\begin{equation}
  \label{eq:80}
  \frac{\nu_1}{2}
  \tensordot{
    \fid{
      \overline{
        \exponential{2 \henckync}
      }
    }
    \exponential{-2 \henckync}
  }
  {
    \fid{
      \overline{
        \exponential{2 \henckync}
      }
    }
  }
  =
  \frac{1}{2 \nu_1}
  \tensordot{
    \thcstressnc
    \exponential{-2 \henckync}
  }
  {
    \thcstressnc
  }
  ,
\end{equation}
which makes its nonnegativity self-evident. (We have exploited the symmetry of all involved matrices.) The term can be also understood as a weighted scalar product of matrices
$
\fid{
  \overline{
    \exponential{2 \henckync}
  }
}
$
with the weight $\exponential{-2\henckync} = \inverse{\left( \frac{\thcstressnc}{\mu(\temp)} + \identity \right)}$, where we have used the fact that $\henckync = - \pd{\gibbs}{\thcstressncrho}(\temp, \thcstressncrho)$ and the particular formula for the Gibbs free energy, see~\eqref{eq:71}.

Note that the last observation implies that if we use constitutive assumptions~\eqref{eq:75} and~\eqref{eq:71}, then the entropy evolution equation~\eqref{eq:78} rewritten in terms of~$\thcstressnc$ reads
\begin{equation}
  \label{eq:81}
  \rho \dd{\entropy}{t}
  +
  \divergence
  \left(
    \frac{\hfluxc}{\temp}
  \right)
  =
  \frac{1}{\temp}
  \left\{
    2 \nu
    \tensordot{\traceless{\gradsym}}{\traceless{\gradsym}}
    +
    \frac{1}{2 \nu_1}
    \tensordot{
      \thcstressnc
      \inverse{
        \left(
          \frac{\thcstressnc}{\mu(\temp)}
          +
          \identity
        \right)
      }
    }
    {
      \thcstressnc
    }
  \right\}
  -
  \frac{\vectordot{\hfluxc}{\nabla \temp}}{\temp^2}
  .
\end{equation}
In this case the nonnegativity of the entropy production in mechanical processes still holds---equation~\eqref{eq:81} is the same as equation~\eqref{eq:78} but rewritten in terms of different variables. However, the nonnegativity of the entropy production is less straightforward to recognise than in the formulation based on $\henckync$.

If we further assume that the heat flux is given by the standard Fourier law,
\begin{equation}
  \label{eq:82}
  \hfluxc = - \kappa \nabla \temp,
\end{equation}
where $\kappa$ is a nonnegative constant, then the entropy production both in thermal and mechanical processes is nonnegative and the second law of thermodynamics holds.

Once we have identified the energy storage mechanisms---specific Gibbs free energy~\eqref{eq:69}---and the entropy production mechanisms---terms on the right-hand side of the entropy evolution equation~\eqref{eq:78}---the complete system of governing equations is obtained as follows. First, we use~\eqref{eq:76} and~\eqref{eq:65}, and we can conclude that the evolution equation for $\henckync$ reads
\begin{equation}
  \label{eq:83}
  -
  \nu_1
  \fid{
    \overline{
      \exponential{2 \henckync}
    }
  }
  =
  \mu (\temp)
  \left(
    \exponential{2 \henckync}
    -
    \identity
  \right)
  ,
\end{equation}
which can be as well rewritten in terms of $\thcstressnc$ as
\begin{equation}
  \label{eq:84}
  \nu_1
  \fid{
    \overline{
      \left(
        \frac{
          \thcstressnc
        }
        {
          \mu(\temp)
        }
      \right)
    }
  }
  +
  \thcstressnc
  =
  2 \nu_1 \gradsym.
\end{equation}
(In the latter case it suffices to take the upper convected derivative of~\eqref{eq:65} and use~\eqref{eq:76}. We recall that the definition of the upper convected derivative implies that $\fid{\identity} = - 2 \gradsym$.) Finally, if we wanted to, we could go back to the left Cauchy--Green tensor $\lcgnc$, see~\eqref{eq:62}, which allows us to rewrite~\eqref{eq:83} as
\begin{equation}
  \label{eq:85}
  \nu_1 \fid{\overline{\lcgnc}}
  =
  -
  \mu(\theta)
  \left(
    \lcgnc - \identity
  \right)
  .
\end{equation}
This is the evolution equation~\eqref{eq:giesekus-lcg-evolution-equation} for $\alpha = 0$.

If we use~\eqref{eq:84}, then the system of governing equations for the mechanical quantities reads
\begin{subequations}
  \label{eq:oldroyd-b-gibbs-governing-equations}
  \begin{align}
    \label{eq:oldroyd-b-gibbs-incompressibility-condition}
    \divergence \vec{v} 
    &=
      0,
    \\
    \label{eq:oldroyd-b-gibbs-linear-momentum-balance}
    \rho \dd{\vec{v}}{t}
    &=
      \divergence \cstress + \rho \vec{b},
    \\
    \label{eq:oldroyd-b-gibbs-lcg-evolution-equation}
    \nu_1
    \fid{
    \overline{
    \left(
    \frac{
    \thcstressnc
    }
    {
    \mu(\temp)
    }
    \right)
    }
    }
    +
    \thcstressnc
    &=
      2 \nu_1 \gradsym
      ,
  \end{align}
  where the Cauchy stress tensor $\cstress$ is given by the formula
  \begin{equation}
    \label{eq:86}
    \cstress = \mns \identity +  2 \nu \traceless{\gradsym} + \devthcstressncdelta.
  \end{equation}
\end{subequations}
Since~\eqref{eq:oldroyd-b-gibbs-lcg-evolution-equation} can be rewritten as~\eqref{eq:85}, and since \eqref{eq:65} holds, that is
$
\thcstressnc
=
\mu (\temp)
\left(
  \exponential{2 \henckync}
  -
  \identity
\right)$
holds, we see that~\eqref{eq:oldroyd-b-gibbs-governing-equations} is equivalent to the mechanical part of~\eqref{eq:giesekus-governing-equations}. (A minor redefinition of the mean normal stress $\mns$ is necessary. This is however an inconsequential modification of the governing equations.)

Regarding the temperature evolution equation~\eqref{eq:giesekus-temperature-evolution-equation} we have to start with the entropy evolution equation either in the form~\eqref{eq:78} or in the form~\eqref{eq:81}. In particular, we need to show that the entropy production identified in~\eqref{eq:78} or equivalently in~\eqref{eq:81} is tantamount to the entropy production~\eqref{eq:giesekus-entropy-production} that leads to the Oldroyd-B model. (We recall that Oldroyd-B model corresponds to $\alpha=0$.) The critical term can be manipulated as follows
\begin{multline}
  \label{eq:87}
  \frac{\nu_1}{2}
  \tensordot{
    \fid{
      \overline{
        \exponential{2 \henckync}
      }
    }
    \exponential{-2 \henckync}
  }
  {
    \fid{
      \overline{
        \exponential{2 \henckync}
      }
    }
  }
  =
  \frac{\nu_1}{2}
  \tensordot{
    \fid{
      \overline{
        \lcgnc
      }
    }
    \inverse{\lcgnc}
  }
  {
    \fid{
      \overline{
        \lcgnc
      }
    }
  }
  =
  \frac{\nu_1}{2}
  \tensordot{
    \frac{\mu(\theta)}{\nu_1}
    \left(
      \lcgnc - \identity
    \right)
    \inverse{\lcgnc}
  }
  {
    \frac{\mu(\theta)}{\nu_1}
    \left(
      \lcgnc - \identity
    \right)
  }
  \\
  =
  \frac{\mu(\theta)^2}{2 \nu_1}
  \Tr
  \left[
    \inverse{\lcgnc}
    \left(
      \lcgnc - \identity
    \right)^2
  \right]
  .
\end{multline}
Making use of identity~\eqref{eq:58} we see that the entropy production mechanisms postulated in~\eqref{eq:78} or~\eqref{eq:81} are identical to those postulated in the standard derivation of the Oldroyd-B model~\eqref{eq:giesekus-entropy-production}. Note that~\eqref{eq:87} explicitly shows that the entropy production for the Oldroyd-B fluid can be equivalently written down either in terms of a algebraic formula for $\lcgnc$ or in terms of algebraic formula to the rates $\fid{\overline{\lcgnc}}$.

Since the entropy production mechanisms $\entprodc$ are now confirmed to be the same in both approaches, we see that the right-hand side of the entropy evolution equation
\begin{equation}
  \label{eq:88}
  \rho \dd{\entropy}{t}
  +
  \divergence
  \left(
    \frac{\hfluxc}{\temp}
  \right)
  =
  \entprodc
\end{equation}
is the same both for the model derived via the Gibbs free energy and the Helmholtz free energy. Finally, it suffices to recall that the the entropy $\entropy$ is given in terms of derivatives of the potentials as $\entropy(\temp, \thcstressncrho) = - \pd{\gibbs}{\temp}(\temp, \thcstressncrho)$ and $\entropy(\temp, \lcgnc) = - \pd{\fenergy}{\temp}(\temp, \lcgnc)$, see~\eqref{eq:38} and~\eqref{eq:20}. Consequently, if we substitute for $\entropy$ into~\eqref{eq:88}, we get for both approaches the same evolution equation for the temperature. (In the case of the Gibbs free energy the primitive variable will be the reduced stress tensor $\thcstressncrho$, while in the case of Helmholtz free energy the primitive variable will be the left Cauchy--Green tensor associated to the elastic response $\lcgnc$.) For readers convenience both approaches are compared in Summary~\ref{summary:gibbs} and Summary~\ref{summary:helmholtz}.

\begin{summary}[Incompressible Oldroyd-B model via Gibbs free energy]
  \label{summary:gibbs}
  Specific Gibbs free energy:
  \begin{align*}
    \gibbs(\temp, \thcstressncrho)
    &=_{\bydefinition}
      \gibbs_{\mathrm{thermal}}(\temp)
      +
      \frac{\mu(\temp)}{2 \rho}
      \Tr
      \left(
      \frac{\rho}{\mu(\temp)} \thcstressncrho
      -
      \left[
      \frac{\rho}{\mu(\temp)} \thcstressncrho
      +
      \identity
      \right]
      \ln
      \left[
      \frac{\rho}{\mu(\temp)} \thcstressncrho
      +
      \identity
      \right]
      \right)
    \\
    \gibbs_{\mathrm{thermal}}(\temp)
    &=_{\bydefinition}
      - \cheatvolref \temp \left( \ln\frac{\temp}{\temp_{\reference}} - 1 \right)
  \end{align*}
  Reduced stress (notation):
  \begin{equation*}
    \thcstressncrho =_{\bydefinition} \frac{\thcstressnc}{\rho}
  \end{equation*}
  Entropy production:
  \begin{equation*}
    \entprodc
    =
    \frac{1}{\temp}
    \left\{
      2 \nu
      \tensordot{\traceless{\gradsym}}{\traceless{\gradsym}}
      +
      \frac{1}{2 \nu_1}
      \tensordot{
        \thcstressnc
        \inverse{
          \left(
            \frac{\thcstressnc}{\mu(\temp)}
            +
            \identity
          \right)
        }
      }
      {
        \thcstressnc
      }
    \right\}
    +
    \kappa
    \frac{\vectordot{\nabla \temp}{\nabla \temp}}{\temp^2}
  \end{equation*}
  Material parameters: $\cheatvolref$ specific heat at constant volume -- positive constant; $\mu(\temp)$ shear modulus -- nonnegative function, typically proportional to $\temp$; $\nu$, $\nu_1$ viscosity -- nonnegative functions of the primitive variables, typically constants; $\kappa$ thermal conductivity -- nonnegative function of the primitive variables, typically constant\bigskip

  Evolution equations (mechanical variables $\mns$, $\vec{v}$, $\thcstressnc$ and thermal variable $\temp$):
  \begin{align*}
    \divergence \vec{v} 
    &=
      0
    \\
    \rho \dd{\vec{v}}{t}
    &=
      \divergence \cstress + \rho \vec{b}
    \\
    \nu_1
    \fid{
    \overline{
    \left(
    \frac{
    \thcstressnc
    }
    {
    \mu(\temp)
    }
    \right)
    }
    }
    +
    \thcstressnc
    &=
      2 \nu_1 \gradsym
    \\
    -
    \rho
    \temp
    \dd{}{t}\left( \pd{\gibbs}{\temp}(\temp, \thcstressncrho) \right)
    &=
      \divergence
      \left(
      \kappa \temp
      \right)
      +
      2 \nu
      \tensordot{\traceless{\gradsym}}{\traceless{\gradsym}}
      +
      \frac{1}{2 \nu_1}
      \tensordot{
      \thcstressnc
      \inverse{
      \left(
      \frac{\thcstressnc}{\mu(\temp)}
      +
      \identity
      \right)
      }
      }
      {
      \thcstressnc
      }
  \end{align*}
  Cauchy stress tensor:
  \begin{equation*}
    \cstress = \mns \identity +  2 \nu \traceless{\gradsym} + \devthcstressncdelta
  \end{equation*}
  Thermodynamical relations:
  \begin{align*}
    \henckync &= - \pd{\gibbs}{\thcstressncrho}(\temp, \thcstressncrho) \\
    \entropy &= - \pd{\gibbs}{\temp}(\temp, \thcstressncrho)
  \end{align*}

  Left Cauchy--Green tensor $\lcgnc$ and Hencky strain tensor $\henckync$ associated to the elastic response:
  \begin{equation*}
    \lcgnc = \exponential{2 \henckync}
  \end{equation*}
\end{summary}

\begin{summary}[Incompressible Oldroyd-B model via Helmholtz free energy]
  \label{summary:helmholtz}
  Specific Helmholtz free energy:
  \begin{align*}
    \fenergy(\temp, \lcgnc) &=_{\bydefinition} \fenergy_{\mathrm{thermal}}(\temp) + \frac{\mu(\temp)}{2\rho} \left( \Tr \lcgnc - 3 - \ln \det \lcgnc \right)
    \\
    \fenergy_{\mathrm{thermal}}(\temp) &=_{\bydefinition} - \cheatvolref \temp \left( \ln\frac{\temp}{\temp_{\reference}} - 1 \right).
  \end{align*}
  Entropy production:
  \begin{equation*}
    \entprodc
    =
    \frac{1}{\temp}
    \left\{
      2 \nu \tensordot{\traceless{\gradsym}}{\traceless{\gradsym}}
      +
      \frac{\mu(\theta)^2}{2 \nu_1} \Tr 
      \left[ 
        \lcgnc + \inverse{\lcgnc} - 2 \identity
      \right]
    \right\}
    +
    \kappa
    \frac{\vectordot{\nabla \temp}{\nabla \temp}}{\temp^2}
  \end{equation*}
  Material parameters: $\cheatvolref$ specific heat at constant volume -- positive constant; $\mu(\temp)$ shear modulus -- nonnegative function, typically proportional to $\temp$; $\nu$, $\nu_1$ viscosity -- nonnegative functions of the primitive variables, typically constants; $\kappa$ thermal conductivity -- nonnegative function of the primitive variables, typically constant\bigskip

  Evolution equations (mechanical variables $\mns$, $\vec{v}$, $\lcgnc$ and thermal variable $\temp$):
  \begin{align*}
    \divergence \vec{v} 
    &=
      0
    \\
    \rho \dd{\vec{v}}{t}
    &=
      \divergence \cstress + \rho \vec{b}
    \\
    \nu_1 \fid{\overline{\lcgnc}}
    &=
      -
      \mu(\theta)
      \left[
      \lcgnc - \identity
      \right]
    \\
    -
    \rho
    \temp
    \dd{}{t}\left( \pd{\fenergy}{\temp}(\temp, \lcgnc) \right)
    &=
      \divergence
      \left(
      \kappa \temp
      \right)
      +
      2 \nu
      \tensordot{\traceless{\gradsym}}{\traceless{\gradsym}}
      +
      \frac{\mu(\theta)^2}{2 \nu_1} \Tr 
      \left[ 
      \lcgnc + \inverse{\lcgnc} - 2 \identity
      \right]
  \end{align*}
  Cauchy stress tensor:
  \begin{equation*}
    \cstress = \mns \identity +  2 \nu \traceless{\gradsym} + \mu(\theta) \traceless{\left( \lcgnc \right)}
  \end{equation*}
  Thermodynamical relations:
  \begin{align*}
    \entropy = - \pd{\fenergy}{\temp}(\temp, \lcgnc)
  \end{align*}

  Left Cauchy--Green tensor $\lcgnc$ and Hencky strain tensor $\henckync$ associated to the elastic response:
  \begin{equation*}
    \lcgnc = \exponential{2 \henckync}
  \end{equation*}

\end{summary}

\subsubsection{Giesekus model}
\label{sec:giesekus}

Regarding the Giesekus model, the derivation in the approach based on the Gibbs free energy is a slight modification of the derivation of the Oldroyd-B model presented above. The Gibbs free energy remains the same, see~\eqref{eq:69}, and the entropy evolution equation remains the same~\eqref{eq:74} as well, that is
\begin{equation}
  \label{eq:89}
  \rho \dd{\entropy}{t}
  +
  \divergence
  \left(
    \frac{\hfluxc}{\temp}
  \right)
  =
  \frac{1}{\temp}
  \left\{
    \tensordot{\left[ \traceless{\cstress} - \devthcstressncdelta \right]}{\traceless{\gradsym}}
    -
    \frac{1}{2}
    \tensordot{
      \thcstressnc
      \exponential{-2 \henckync}
    }
    {
      \fid{
        \overline{
          \exponential{2 \henckync}
        }
      }
    }
  \right\}
  -
  \frac{\vectordot{\hfluxc}{\nabla \temp}}{\temp^2}
  .
\end{equation}
The entropy production mechanisms are however modified to
\begin{equation}
  \label{eq:90}
  \rho \dd{\entropy}{t}
  +
  \divergence
  \left(
    \frac{\hfluxc}{\temp}
  \right)
  =
  \frac{1}{\temp}
  \left\{
    2 \nu
    \tensordot{\traceless{\gradsym}}{\traceless{\gradsym}}
    +
    \frac{1}{2 \nu_1}
    \tensordot{
      \thcstressnc
      \exponential{-2 \henckync}
    }
    {
      \left[
        \thcstressnc
        \left(
          \identity
          +
          \frac{\alpha}{\mu(\temp)}
          \thcstressnc
        \right)
      \right]
    }
  \right\}
  +
  \kappa
  \frac{\vectordot{\nabla \temp}{\nabla \temp}}{\temp^2}
  ,
\end{equation}
which implies constitutive relations
\begin{subequations}
  \label{eq:91}
  \begin{align}
    \label{eq:92}
    \hfluxc &=_{\bydefinition} - \kappa \nabla \temp, \\
    \label{eq:93}
    \thcstressnc
    \left(
    \identity
    +
    \frac{\alpha}{\mu(\temp)}
    \thcstressnc
    \right)
            &=_{\bydefinition}
              -
              \nu_1
              \fid{
              \overline{
              \exponential{2 \henckync}
              }
              },
    \\
    \label{eq:94}
    \traceless{\cstress} &=_{\bydefinition} \devthcstressncdelta + 2 \nu \traceless{\gradsym}.
  \end{align}
\end{subequations}

It is straightforward to check that the entropy production specified in~\eqref{eq:90} is nonnegative. Indeed, the critical term can be manipulated as follows
\begin{multline}
  \label{eq:95}
  \tensordot{
    \thcstressnc
    \exponential{-2 \henckync}
  }
  {
    \left[
      \thcstressnc
      \left(
        \identity
        +
        \frac{\alpha}{\mu(\temp)}
        \thcstressnc
      \right)
    \right]
  }
  =
  \tensordot{
    \thcstressnc
    \exponential{-2 \henckync}
  }
  {
    \thcstressnc
    \left[
      \alpha
      \left(
        \identity
        +
        \frac{\thcstressnc}{\mu(\temp)}
      \right)
      +
      \left(
        1 - \alpha
      \right)
      \identity
    \right]
  }
  \\
  =
  \left(
    1 - \alpha
  \right)
  \tensordot{
    \thcstressnc
    \exponential{-2 \henckync}
  }
  {
    \thcstressnc
  }
  +
  \alpha
  \tensordot{
    \thcstressnc
    \exponential{-2 \henckync}
  }
  {
    \left(
      \identity
      +
      \frac{\thcstressnc}{\mu(\temp)}
    \right)
    \thcstressnc
  }
  \\
  =
  \left(
    1 - \alpha
  \right)
  \tensordot{
    \thcstressnc
    \inverse{
      \left(
        \frac{\thcstressnc}{\mu(\temp)}
        +
        \identity
      \right)
    }
  }
  {
    \thcstressnc
  }
  +
  \alpha
  \tensordot{
    \thcstressnc
  }
  {
    \thcstressnc
  }
  ,
\end{multline}
where we have used the relation \eqref{eq:66} between the stress tensor~$\thcstressnc$ and the Hencky strain tensor~$\henckync$. The first term on the right-hand side of~\eqref{eq:95} is up to a positive factor $(1-\alpha)$ the same as the term in the entropy production for the Oldroyd-B model, see~\eqref{eq:81}, while the nonnegativity of the term
$
\alpha
\tensordot{
  \thcstressnc
}
{
  \thcstressnc
}
$
is evident. (Note that the entropy production is particularly simple if we set $\alpha=1$.) Using the same steps as in the previous section it is then straightforward to check that the constitutive relations~\eqref{eq:91} lead, once rewritten in terms of $\lcgnc$, to the governing equations for the standard Giesekus viscoelastic rate-type fluid.

To conclude, we see that the Giesekus viscoelastic rate-type fluid is specified by the Gibbs free energy in the form
\begin{align}
  \label{eq:96}
  \gibbs(\temp, \thcstressncrho)
  &=_{\bydefinition}
    \gibbs_{\mathrm{thermal}}(\temp)
    +
    \frac{\mu(\temp)}{2 \rho}
    \Tr
    \left(
    \frac{\rho}{\mu(\temp)} \thcstressncrho
    -
    \left[
    \frac{\rho}{\mu(\temp)} \thcstressncrho
    +
    \identity
    \right]
    \ln
    \left[
    \frac{\rho}{\mu(\temp)} \thcstressncrho
    +
    \identity
    \right]
    \right)
  \\
  \label{eq:97}
  \gibbs_{\mathrm{thermal}}(\temp)
  &=_{\bydefinition}
    - \cheatvolref \temp \left( \ln\frac{\temp}{\temp_{\reference}} - 1 \right)
\end{align}
where $\thcstressncrho =_{\bydefinition} \frac{\thcstressnc}{\rho}$, while the entropy production takes the form
\begin{equation}
  \label{eq:98}
  \entprodc
  =
  \frac{1}{\temp}
  \left\{
    2 \nu
    \tensordot{\traceless{\gradsym}}{\traceless{\gradsym}}
    +
    \frac{1}{2 \nu_1}
    \left[
      \left(
        1 - \alpha
      \right)
      \tensordot{
        \thcstressnc
        \inverse{
          \left(
            \frac{\thcstressnc}{\mu(\temp)}
            +
            \identity
          \right)
        }
      }
      {
        \thcstressnc
      }
      +
      \alpha
      \tensordot{
        \thcstressnc
      }
      {
        \thcstressnc
      }
    \right]
  \right\}
  +
  \kappa
  \frac{\vectordot{\nabla \temp}{\nabla \temp}}{\temp^2}
  .
\end{equation}

\subsection{Approach based on Gibbs free energy and entropy production maximisation}
\label{sec:entr-prod-maxim}

In a nutshell the hypothesis of entropy production maximisation, see~\cite{rajagopal.kr.srinivasa.ar:on*7}, states that once the entropy production ability of the material is specified in terms of \emph{entropy production}~$\entprodc$ given as a function of thermodynamics affinities/fluxes, and once all other structural constraints are taken into account, then the \emph{constitutive relations for fluxes/affinities} as functions of affinites/fluxes must be chosen in such a way that the entropy production is by this choice of fluxes/affinities maximised. (See also~\cite{malek.j.prusa.v:derivation} for an explanatory presentation and worked out examples.) In the remainder of this section we show how the procedure works for the Giesekus/Oldroyd-B models in the approach based on the Gibbs free energy.

Note however that the terminology affinity/flux, albeit the standard one, might be seen as inappropriate for several reasons, see~\cite{rajagopal.kr.srinivasa.ar:some}. In the current contribution we therefore take the liberty to use the term \emph{resultant} instead of affinity and the term~\emph{determinant} instead of~\emph{flux}, see~\cite{rajagopal.kr.srinivasa.ar:some} for the rationale of this terminology.

\subsubsection{Oldroyd-B model}
\label{sec:oldroyd-b}
The Gibbs free energy remains the same as in the preceding sections, see~\eqref{eq:69}, and the entropy evolution equation remains the same~\eqref{eq:74} as well, that is
\begin{equation}
  \label{eq:99}
  \rho \dd{\entropy}{t}
  +
  \divergence
  \left(
    \frac{\hfluxc}{\temp}
  \right)
  =
  \frac{1}{\temp}
  \left\{
    \tensordot{\left[ \traceless{\cstress} - \devthcstressncdelta \right]}{\traceless{\gradsym}}
    -
    \frac{1}{2}
    \tensordot{
      \thcstressnc
      \exponential{-2 \henckync}
    }
    {
      \fid{
        \overline{
          \exponential{2 \henckync}
        }
      }
    }
  \right\}
  -
  \frac{\vectordot{\hfluxc}{\nabla \temp}}{\temp^2}
  .
\end{equation}
In virtue of thermodynamic relation~$\henckync = - \pd{\gibbs}{\thcstressncrho}(\temp, \thcstressncrho)$ we for the given Gibbs free energy~\eqref{eq:69} get
\begin{equation}
  \label{eq:100}
  \thcstressnc
  =
  \mu (\temp)
  \left(
    \exponential{2 \henckync}
    -
    \identity
  \right)
  ,
\end{equation}
see Section~\ref{sec:rederivation} for details. This relation allows us to rewrite~\eqref{eq:99} as
\begin{equation}
  \label{eq:101}
  \rho \dd{\entropy}{t}
  +
  \divergence
  \left(
    \frac{\hfluxc}{\temp}
  \right)
  =
  \frac{1}{\temp}
  \left\{
    \tensordot{\left[ \traceless{\cstress} - \devthcstressncdelta \right]}{\traceless{\gradsym}}
    -
    \frac{1}{2}
    \tensordot{
      \thcstressnc
      \inverse{
        \left(
          \frac{\thcstressnc}{\mu(\temp)} + \identity
        \right)
      }
    }
    {
      \fid{
        \overline{
          \left(
            \frac{\thcstressnc}{\mu(\temp)} + \identity
          \right)
        }
      }
    }
  \right\}
  -
  \frac{\vectordot{\hfluxc}{\nabla \temp}}{\temp^2}
  .
\end{equation}
This is the entropy production as it is implied by the choice of the Gibbs free energy~\eqref{eq:69}. (Note that~\eqref{eq:101} is unlike~\eqref{eq:99} written down in terms of the stress tensor $\thcstressnc$.) The rate-type quantities and can be identified as resultants,
\begin{subequations}
  \label{eq:102}
  \begin{align}
    \label{eq:103}
    \tensorq{A}_{1} &=_{\bydefinition} \gradsym,
    \\
    \label{eq:104}
    \tensorq{A}_{2} &=_{\bydefinition} \fid{
                      \overline{
                      \left(
                      \frac{\thcstressnc}{\mu(\temp)} + \identity
                      \right)
                      }
                      }
  \end{align}
\end{subequations}
while the corresponding quantities in the dot products in~\eqref{eq:101} can be identified as determinants,
\begin{subequations}
  \label{eq:105}
  \begin{align}
    \label{eq:106}
    \tensorq{J}_1
    &=_{\bydefinition}
      \traceless{\cstress} - \devthcstressncdelta
      ,
    \\
    \label{eq:107}
    \tensorq{J}_2
    &=_{\bydefinition}
      -
      \frac{1}{2}
      \thcstressnc
      \inverse{
      \left(
      \frac{\thcstressnc}{\mu(\temp)} + \identity
      \right)
      }
  \end{align}
\end{subequations}
(We also have the classical determinant--resultant pair $\hfluxc$ and $\nabla \temp$.) On the other hand the assumption regarding the specification of the entropy production is for the Oldroyd-B model the following
\begin{equation}
  \label{eq:108}
  \entprodc
  =
  \frac{1}{\temp}
  \left\{
    2 \nu
    \tensordot{\traceless{\gradsym}}{\traceless{\gradsym}}
    +
    \frac{\nu_1}{2}
    \tensordot{
      \fid{
        \overline{
          \left(
            \frac{\thcstressnc}{\mu(\temp)} + \identity
          \right)
        }
      }
      \inverse{
        \left(
          \frac{\thcstressnc}{\mu(\temp)}
          +
          \identity
        \right)
      }
    }
    {
      \fid{
        \overline{
          \left(
            \frac{\thcstressnc}{\mu(\temp)} + \identity
          \right)
        }
      }
    }
  \right\}
  +
  \kappa
  \frac{\vectordot{\nabla \temp}{\nabla \temp}}{\temp^2}
  ,
\end{equation}
see~\eqref{eq:81} and the subsequent discussion. (We need to rewrite~\eqref{eq:98} in such a way that it contains the resultants identified in~\eqref{eq:102}.) The constitutive relations we need to identify are the constitutive relations for the Cauchy stress tensor $\cstress$ and the stress tensor $\thcstressnc$ and the heat flux $\hfluxc$.

The entropy production maximisation procedure requires us to maximise the entropy production~$\entprodc$ given by~\eqref{eq:108} with respect to resultants~\eqref{eq:102} and $\nabla \temp$, and subject to the constraint
\begin{equation}
  \label{eq:109}
  \entprodc
  =
  \frac{1}{\temp}
  \left\{
    \tensordot{\left[ \traceless{\cstress} - \devthcstressncdelta \right]}{\traceless{\gradsym}}
    -
    \frac{1}{2}
    \tensordot{
      \thcstressnc
      \inverse{
        \left(
          \frac{\thcstressnc}{\mu(\temp)} + \identity
        \right)
      }
    }
    {
      \fid{
        \overline{
          \left(
            \frac{\thcstressnc}{\mu(\temp)} + \identity
          \right)
        }
      }
    }
  \right\}
  -
  \frac{\vectordot{\hfluxc}{\nabla \temp}}{\temp^2}
  ,
\end{equation}
that is implied by the right-hand side of~\eqref{eq:101}. The constrained maximisation problem therefore reads
\begin{equation}
  \label{eq:110}
  \max_{\tensorq{A}_1, \, \tensorq{A}_2, \, \nabla \temp} \entprodc
  \qquad \text{subject to} \qquad
  \entprodc 
  -
  \frac{1}{\temp}
  \left\{
    \tensordot{
      \tensorq{A}_1
    }
    {
      \tensorq{J}_1
    }
    +
    \tensordot{
      \tensorq{A}_2
    }
    {
      \tensorq{J}_2
    }
  \right\}
  +
  \frac{\vectordot{\hfluxc}{\nabla \temp}}{\temp^2}
  =
  0
  ,
\end{equation}
where we have used the determinant/resultant notation introduced in~\eqref{eq:102} and~\eqref{eq:105}. The auxiliary functional for the constrained maximisation problem reads
\begin{equation}
  \label{eq:111}
  \Phi =_{\bydefinition}
  \entprodc
  +
  \lambda
  \left(
    \entprodc 
    -
    \frac{1}{\temp}
    \left\{
      \tensordot{
        \tensorq{A}_1
      }
      {
        \tensorq{J}_1
      }
      +
      \tensordot{
        \tensorq{A}_2
      }
      {
        \tensorq{J}_2
      }
    \right\}
    +
    \frac{\vectordot{\hfluxc}{\nabla \temp}}{\temp^2}
  \right)
  ,
\end{equation}
where $\lambda$ is the Lagrange multiplier. The necessary conditions for the extremum are $\pd{\Phi}{\tensorq{A}_1}=\tensorzero$, $\pd{\Phi}{\tensorq{A}_2}=\tensorzero$ and $\pd{\Phi}{\nabla \temp}=\vec{0}$, which translates to
\begin{subequations}
  \label{eq:112}
  \begin{align}
    \label{eq:113}
    \frac{1+\lambda}{\lambda}\pd{\entprodc}{\tensorq{A}_1}
    &=
      \frac{\tensorq{J}_1}{\temp}
      ,
    \\
    \label{eq:114}
    \frac{1+\lambda}{\lambda}\pd{\entprodc}{\tensorq{A}_2}
    &=
      \frac{\tensorq{J}_2}{\temp}
      ,
    \\
    \label{eq:115}
    \frac{1+\lambda}{\lambda}\pd{\entprodc}{\nabla \temp}
    &=
      -
      \frac{\hfluxc}{\temp^2}
      .
  \end{align}
\end{subequations}

Let us first solve~\eqref{eq:112} for the Lagrange multiplier $\lambda$. Once we identify the value of the Lagrange multiplier $\lambda$, it will be straightforward to find the determinants $\hfluxc$, $\tensorq{J}_1$ and $\tensorq{J}_2$ for which the entropy production attains its maximum.  Taking the dot product of individual equations in~\eqref{eq:112} with the corresponding resultant, and taking the sum of these equations yields
\begin{equation}
  \label{eq:116}
  \frac{1+\lambda}{\lambda}
  \left(
    \tensordot{\pd{\entprodc}{\tensorq{A}_1}}{\tensorq{A}_1}
    +
    \tensordot{\pd{\entprodc}{\tensorq{A}_2}}{\tensorq{A}_2}
    +
    \vectordot{\pd{\entprodc}{\nabla \temp}}{\nabla \temp}
  \right)
  =
  \frac{1}{\temp}
  \left\{
    \tensordot{
      \tensorq{A}_1
    }
    {
      \tensorq{J}_1
    }
    +
    \tensordot{
      \tensorq{A}_2
    }
    {
      \tensorq{J}_2
    }
  \right\}
  -
  \frac{\vectordot{\hfluxc}{\nabla \temp}}{\temp^2}
  .
\end{equation}
Making use of constraint~\eqref{eq:109}, we see that the right-hand side of~\eqref{eq:116} is equal to $\entprodc$. Furthermore, the direct differentiation of $\entprodc$ given by the formula~\eqref{eq:108} yields
\begin{subequations}
  \label{eq:117}
  \begin{align}
    \label{eq:118}
    \pd{\entprodc}{\tensorq{A}_1}
    &=
      \frac{1}{\temp}
      4 \nu \gradsym
      ,
    \\
    \label{eq:119}
    \pd{\entprodc}{\tensorq{A}_2}
    &=
      \frac{1}{\temp}
      \frac{\nu_1}{2}
      \left[
      \fid{
      \overline{
      \left(
      \frac{\thcstressnc}{\mu(\temp)} + \identity
      \right)
      }
      }
      \inverse{
      \left(
      \frac{\thcstressnc}{\mu(\temp)}
      +
      \identity
      \right)
      }
      +
      \inverse{
      \left(
      \frac{\thcstressnc}{\mu(\temp)}
      +
      \identity
      \right)
      }
      \fid{
      \overline{
      \left(
      \frac{\thcstressnc}{\mu(\temp)} + \identity
      \right)
      }
      }
      \right]
      ,
    \\
    \label{eq:120}
    \pd{\entprodc}{\nabla \temp}
    &=
      2
      \kappa
      \frac{\nabla \temp}{\temp^2}
      ,
  \end{align}
\end{subequations}
which implies that
\begin{equation}
  \label{eq:121}
  \tensordot{\pd{\entprodc}{\tensorq{A}_1}}{\tensorq{A}_1}
  +
  \tensordot{\pd{\entprodc}{\tensorq{A}_2}}{\tensorq{A}_2}
  +
  \vectordot{\pd{\entprodc}{\nabla \temp}}{\nabla \temp}
  =
  2 \entprodc
\end{equation}
(This is not surprising since the entropy production is ``quadratic'' in resultants.) Making use of~\eqref{eq:121} in~\eqref{eq:116} allows us to easily identify the Lagrange multiplier $\lambda$. Indeed~\eqref{eq:116} implies that $2 \frac{1+\lambda}{\lambda} \entprodc = \entprodc$, hence
\begin{equation}
  \label{eq:122}
  \frac{1+\lambda}{\lambda} = \frac{1}{2}.
\end{equation}
Having obtained the formula for the Lagrange multiplier, we can go back to~\eqref{eq:112} and find formulas for the determinants. We see that equations~\eqref{eq:112} immediately imply that
\begin{subequations}
  \label{eq:123}
  \begin{align}
    \label{eq:124}
    \traceless{\cstress} - \devthcstressncdelta &= 2 \nu \traceless{\gradsym}, \\
    \label{eq:125}
    \hfluxc &= - \kappa \nabla \temp,
    \\
    \label{eq:126}
    \frac{\nu_1}{4}
    \left[
    \fid{
    \overline{
    \left(
    \frac{\thcstressnc}{\mu(\temp)} + \identity
    \right)
    }
    }
    \inverse{
    \left(
    \frac{\thcstressnc}{\mu(\temp)}
    +
    \identity
    \right)
    }
    +
    \inverse{
    \left(
    \frac{\thcstressnc}{\mu(\temp)}
    +
    \identity
    \right)
    }
    \fid{
    \overline{
    \left(
    \frac{\thcstressnc}{\mu(\temp)} + \identity
    \right)
    }
    }
    \right]
                                                &=
                                                  -
                                                  \frac{1}{2}
                                                  \thcstressnc
                                                  \inverse{
                                                  \left(
                                                  \frac{\thcstressnc}{\mu(\temp)} + \identity
                                                  \right)
                                                  }
                                                  .
  \end{align}
\end{subequations}
If~\eqref{eq:126} holds, then the matrices
$
\inverse{
  \left(
    \frac{\thcstressnc}{\mu(\temp)}
    +
    \identity
  \right)
}
$
and
$
\fid{
  \overline{
    \left(
      \frac{\thcstressnc}{\mu(\temp)} + \identity
    \right)
  }
}
$
commute, which means that~\eqref{eq:126} in fact simplifies to
\begin{equation}
  \label{eq:127}
  \nu_1
  \fid{
    \overline{
      \left(
        \frac{\thcstressnc}{\mu(\temp)} + \identity
      \right)
    }
  }
  =
  -
  \thcstressnc.
\end{equation}
This is the evolution equation for $\thcstressnc$ in the case of Oldroyd-B model, see~\eqref{eq:84} and the discussion following this equation.

In order to show that the matrices
$
\inverse{
  \left(
    \frac{\thcstressnc}{\mu(\temp)}
    +
    \identity
  \right)
}
$
and
$
\fid{
  \overline{
    \left(
      \frac{\thcstressnc}{\mu(\temp)} + \identity
    \right)
  }
}
$
commute, we need to show that they share the same eigenvectors. Let us assume that $\vec{v}$ is an arbitrary eigenvector of
$
\frac{\thcstressnc}{\mu(\temp)}
+
\identity
$
associated to the eigenvalue $\widetilde{\lambda}$. (Since the matrix is symmetric positive definite we know that the normalised eigenvectors form an orthonormal basis, and that the eigenvalues are positive. This also guarantees that the matrix is invertible.) Such an eigenvector is also an eigenvector of matrix
$
\thcstressnc
$
with eigenvalue
$
\mu(\temp)
\left(
  \widetilde{\lambda} - 1
\right)
$%
.
Making use of~\eqref{eq:126} we see that
\begin{equation}
  \label{eq:128}
  \frac{\nu_1}{4}
  \left[
    \fid{
      \overline{
        \left(
          \frac{\thcstressnc}{\mu(\temp)} + \identity
        \right)
      }
    }
    \frac{1}{\widetilde{\lambda}}
    +
    \inverse{
      \left(
        \frac{\thcstressnc}{\mu(\temp)}
        +
        \identity
      \right)
    }
    \fid{
      \overline{
        \left(
          \frac{\thcstressnc}{\mu(\temp)} + \identity
        \right)
      }
    }
  \right]
  \vec{v}
  =
  -
  \frac{1}{2}
  \frac{
    \mu(\temp)
    \left(
      \widetilde{\lambda} - 1
    \right)
  }
  {
    \widetilde{\lambda}
  }
  \vec{v}
  ,
\end{equation}
hence
\begin{equation}
  \label{eq:129}
  \fid{
    \overline{
      \left(
        \frac{\thcstressnc}{\mu(\temp)} + \identity
      \right)
    }
  }
  \vec{v}
  =
  -
  \frac{4}{\nu_1}
  \inverse{
    \left(
      \frac{1}{\widetilde{\lambda}} \identity
      +
      \inverse{
        \left(
          \frac{\thcstressnc}{\mu(\temp)}
          +
          \identity
        \right)
      }
    \right)
  }
  \frac{1}{2}
  \frac{
    \mu(\temp)
    \left(
      \widetilde{\lambda} - 1
    \right)
  }
  {
    \widetilde{\lambda}
  }
  \vec{v}
  ,
\end{equation}
which shows that
$
\fid{
  \overline{
    \left(
      \frac{\thcstressnc}{\mu(\temp)} + \identity
    \right)
  }
}
\vec{v}
=
-
\frac{\mu(\temp)}{\nu_1}
\left(
  \widetilde{\lambda} - 1
\right)
\vec{v}
$%
.
This manipulations shows that an arbitrary eigenvector of 
$
\frac{\thcstressnc}{\mu(\temp)}
+
\identity
$
is also an eigenvector of 
$
\fid{
  \overline{
    \left(
      \frac{\thcstressnc}{\mu(\temp)} + \identity
    \right)
  }
}
$,
which was we set out to prove.

\subsubsection{Giesekus model}
\label{sec:giesekus-1}

The derivation of the Giesekus models proceeds along the same lines as the derivation of the Oldroyd-B model. The \emph{energy storage mechanism} characterised by the postulated Gibbs free energy are the same both for the Oldroyd-B model and the Giesekus model, hence the constraint~\eqref{eq:109} is the same. The only difference is in the postulated \emph{entropy production mechanisms}. Instead of entropy production $\entprodc$ given by~\eqref{eq:108} we now set
\begin{equation}
  \label{eq:130}
  \entprodc
  =
  \frac{1}{\temp}
  \left\{
    2 \nu
    \tensordot{\traceless{\gradsym}}{\traceless{\gradsym}}
    +
    \frac{\nu_1}{2}
    \tensordot{
      \fid{
        \overline{
          \left(
            \frac{\thcstressnc}{\mu(\temp)} + \identity
          \right)
        }
      }
      \inverse{
        \left(
          \frac{\thcstressnc}{\mu(\temp)}
          +
          \identity
        \right)
      }
      \inverse{
        \left[
          \alpha
          \left(
            \frac{\thcstressnc}{\mu(\temp)}
            +
            \identity
          \right)
          +
          \left(
            1-\alpha
          \right)
          \identity
        \right]
      }
    }
    {
      \fid{
        \overline{
          \left(
            \frac{\thcstressnc}{\mu(\temp)} + \identity
          \right)
        }
      }
    }
  \right\}
  +
  \kappa
  \frac{\vectordot{\nabla \temp}{\nabla \temp}}{\temp^2}
  .
\end{equation}

Few observations regarding~\eqref{eq:130} are at hand. First, if we set $\alpha = 0$, then we recover the postulated entropy production~$\entprodc$ for the Oldroyd-B model, see~\eqref{eq:108}. Second, since the matrix 
$
\frac{\thcstressnc}{\mu(\temp)} + \identity
$
is positive definite, and since $\alpha \in [0,1]$, we see that the postulated entropy production is nonnegative. In fact, the critical term in the entropy production has the structure of a weighted dot product of matrices
$
\fid{
  \overline{
    \left(
      \frac{\thcstressnc}{\mu(\temp)} + \identity
    \right)
  }
}
$
that represent the resultant $\tensorq{A}_{2}$, see~\eqref{eq:104}.

The constrained maximisation problem~\eqref{eq:110} is now solved by the same procedure as in the case of Oldroyd-B model. (The identification of determinants and resultants remains the same.) In particular, the ``quadratic'' structure of the entropy production allows one to easily identify the Lagrange multiplier $\lambda$ as a solution to the equation $\frac{1+\lambda}{\lambda} = \frac{1}{2}$. The counterpart of~\eqref{eq:126} is then
\begin{multline}
  \label{eq:131}
  \frac{\nu_1}{4}
  \left[
    \fid{
      \overline{
        \left(
          \frac{\thcstressnc}{\mu(\temp)} + \identity
        \right)
      }
    }
    \inverse{
      \left(
        \frac{\thcstressnc}{\mu(\temp)}
        +
        \identity
      \right)
    }
    \inverse{
      \left[
        \alpha
        \left(
          \frac{\thcstressnc}{\mu(\temp)}
          +
          \identity
        \right)
        +
        \left(
          1-\alpha
        \right)
        \identity
      \right]
    }
  \right.
  \\
  +
    \left.
    \inverse{
      \left(
        \frac{\thcstressnc}{\mu(\temp)}
        +
        \identity
      \right)
    }
    \inverse{
      \left[
        \alpha
        \left(
          \frac{\thcstressnc}{\mu(\temp)}
          +
          \identity
        \right)
        +
        \left(
          1-\alpha
        \right)
        \identity
      \right]
    }
    \fid{
      \overline{
        \left(
          \frac{\thcstressnc}{\mu(\temp)} + \identity
        \right)
      }
    }
  \right]
  =
  -
  \frac{1}{2}
  \thcstressnc
  \inverse{
    \left(
      \frac{\thcstressnc}{\mu(\temp)} + \identity
    \right)
  }
  ,
\end{multline}
and regarding the matrices on the left-hand side, it can be shown again that they commute. Equation~\eqref{eq:131} therefore collapses to
\begin{equation}
  \label{eq:132}
  \nu_1
  \inverse{
    \left(
      \frac{\thcstressnc}{\mu(\temp)}
      +
      \identity
    \right)
  }
  \inverse{
    \left[
      \alpha
      \left(
        \frac{\thcstressnc}{\mu(\temp)}
        +
        \identity
      \right)
      +
      \left(
        1-\alpha
      \right)
      \identity
    \right]
  }
  \fid{
    \overline{
      \left(
        \frac{\thcstressnc}{\mu(\temp)} + \identity
      \right)
    }
  }
=
  -
  \thcstressnc
  \inverse{
    \left(
      \frac{\thcstressnc}{\mu(\temp)} + \identity
    \right)
  }
  ,
\end{equation}
which can be rewritten as
\begin{equation}
  \label{eq:133}
  \nu_1
  \fid{
    \overline{
      \left(
        \frac{\thcstressnc}{\mu(\temp)} + \identity
      \right)
    }
  }
  =
  -
  \thcstressnc
    \left[
      \alpha
      \left(
        \frac{\thcstressnc}{\mu(\temp)}
        +
        \identity
      \right)
      +
      \left(
        1-\alpha
      \right)
      \identity
    \right]
\end{equation}
or as
\begin{equation}
  \label{eq:134}
  \nu_1
  \fid{
    \overline{
      \left(
        \frac{\thcstressnc}{\mu(\temp)} + \identity
      \right)
    }
  }
  =
  -
  \mu(\temp)
  \left(
    \frac{\thcstressnc}{\mu(\temp)}
    +
    \alpha
    \left(
      \frac{\thcstressnc}{\mu(\temp)}
    \right)^2
  \right)
  .
\end{equation}
Simple substitution of~\eqref{eq:67} then reveals that~\eqref{eq:134} is tantamount to the standard evolution equation for~$\lcgnc$ used in the Giesekus model, see~\eqref{eq:giesekus-lcg-evolution-equation} and~\eqref{eq:57}.

\section{Conclusion}
\label{sec:conclusion}

We have shown how a novel approach to the constitutive relations for elastic solids can be embedded in the development of mathematical models for viscoelastic fluids, and we have discussed in detail thermodynamic underpinning of such models. In particular, we have focused on the use of Gibbs free energy instead of the Helmholtz free energy. Using the standard Giesekus/Oldroyd-B models as examples, we have show how the alternative approach works in the case of well-known models and in the fully non-isothermal setting. The proposed approach is straightforward to generalise to more complex setting wherein the classical approach based on the Gibbs potential might be impractical of even inapplicable.

\appendix

\section{Daleckii--Krein formula and its consequences}
\label{sec:daleck-krein-form}

Since we frequently need to differentiate matrix valued functions, it is worth recalling that the derivatives of matrix valued functions are easy to obtain using the Daleckii--Krein formula, see~\cite{daletskii.jl.krein.sg:integration}. (Short proofs of the same are given in~\cite[Theorem~5.3.1]{bhatia.r:positive} and \cite[Theorem~V.3.3]{bhatia.r:matrix}.) Regarding the matrix valued functions and especially tools for computations thereof we refer the reader to~\cite{higham.nj:functions}.

\begin{Theorem}[Daleckii--Krein formula]
  Let $\generictensor$ is a real symmetric matrix in $\R^{k\times k}$ with spectral decomposition $\generictensor = \sum_{i=1}^k \lambda_i \tensorq{P}_i$, where $\left\{\lambda_i\right\}_{i=1}^k$ are the eigenvalues of $\generictensor$, and $\left\{ \tensorq{P}_i \right\}_{i=1}^k$ denotes the projection operator to the $i$-th (normalised) eigenvector $\vec{v}_i$, that is $\tensorq{P}_i =_{\bydefinition} \tensortensor{\vec{v}_i}{\vec{v}_i}$. Let $f$ is a continuously differentiable real function defined on the open set containing the spectrum of~$\generictensor$. Then the corresponding matrix valued function $\tensorf{f}$,
  \begin{equation}
    \label{eq:135}
    \tensorf{f} \left(\generictensor\right) =_{\bydefinition} \sum_{i=1}^k f(\lambda_i) \tensorq{P}_i,
  \end{equation}
  is differentiable, and the Gateaux derivative of $\tensorf{f}$ at point~$\generictensor$ in the direction~$\tensorq{X}$ is given by the formula
  \begin{equation}
    \label{eq:136}
    \Diff[\generictensor] \tensorf{f}(\generictensor) \left[ \tensorq{X} \right]
    =
    \sum_{i=1}^k f^\prime \left( \lambda_i \right)
    \tensorq{P}_i \tensorq{X} \tensorq{P}_i
    +
    \sum_{i=1}^k
    \sum_{\substack{j=1\\j \not= i}}^k
    \frac{f(\lambda_i) - f(\lambda_j)}{\lambda_i - \lambda_j}
    \tensorq{P}_i \tensorq{X} \tensorq{P}_j
    .
  \end{equation}
  Furthermore, let $\tensor{\left[ \tensorschur{\generictensor}{\tensorq{B}} \right]}{_{ij}}=_{\bydefinition} \tensor{\generictensorc}{_{ij}} \tensor{\tensorc{B}}{_{ij}}$ denotes the elementwise Schur/Hadamard product of matrices $\generictensor$ and $\tensorq{B}$. (No summation convention.) Using the Schur product we can rewrite~\eqref{eq:136} as
  \begin{equation}
    \label{eq:137}
    \tensor{
      \left[
        \Diff[\generictensor] \tensorf{f}(\generictensor) \left[ \tensorq{X} \right]
      \right]
    }{_{ij}}
    =
    \frac{f(\lambda_i) - f(\lambda_j)}{\lambda_i - \lambda_j}
    \tensor{\tensorc{X}}{_{ij}}
  \end{equation}
  provided that the Schur product is taken in a basis in which $\generictensor$ is diagonal, and for $i=j$ the quotient in~\eqref{eq:137} is interpreted as $f^\prime(\lambda_i)$.
\end{Theorem}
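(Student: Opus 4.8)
The plan is to establish \eqref{eq:136} first for polynomials, where the matrix function is manifestly smooth and the Leibniz rule applies, and then to reach a general continuously differentiable $f$ by a $C^1$ approximation argument; the Schur-product form \eqref{eq:137} will follow by merely rewriting \eqref{eq:136} in an eigenbasis of $\generictensor$. First I would treat the monomial $f(x)=x^n$, for which $\tensorf{f}(\generictensor)=\generictensor^n$ is a polynomial in the entries of $\generictensor$ and hence smooth, with
\begin{equation*}
  \Diff[\generictensor]\generictensor^n\left[\tensorq{X}\right]
  =
  \sum_{p=0}^{n-1}\generictensor^p\,\tensorq{X}\,\generictensor^{\,n-1-p}.
\end{equation*}
Inserting $\generictensor^p=\sum_i\lambda_i^p\tensorq{P}_i$ and using $\tensorq{P}_i\tensorq{P}_j=\delta_{ij}\tensorq{P}_i$ collapses this to $\sum_{i,j}\big(\sum_{p=0}^{n-1}\lambda_i^p\lambda_j^{\,n-1-p}\big)\tensorq{P}_i\tensorq{X}\tensorq{P}_j$. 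The inner sum equals $n\lambda_i^{\,n-1}=f'(\lambda_i)$ when $i=j$ and the geometric sum $\tfrac{\lambda_i^n-\lambda_j^n}{\lambda_i-\lambda_j}=\tfrac{f(\lambda_i)-f(\lambda_j)}{\lambda_i-\lambda_j}$ when $i\neq j$, which is exactly \eqref{eq:136}; linearity then extends it to all polynomials.

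Next I would fix a compact interval $I$ containing the spectrum of $\generictensor$ inside the domain of $f$, approximate $f'$ uniformly on $I$ by polynomials, and integrate to obtain polynomials $p_m$ with $p_m\to f$ and $p_m'\to f'$ uniformly on $I$. The key estimate is that every coefficient appearing in \eqref{eq:136}---either a derivative $q'(\lambda_i)$ or a divided difference $\tfrac{q(\lambda_i)-q(\lambda_j)}{\lambda_i-\lambda_j}$---is, by the mean value theorem, bounded by $\max_{x\in I}|q'(x)|$. Applying this to $q=p_m-p_\ell$ shows that the right-hand sides of \eqref{eq:136} associated to the $p_m$ form a Cauchy sequence whose limit is the right-hand side associated to $f$, the coefficients converging even at coincident eigenvalues thanks to the $C^1$ convergence. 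Because the same bound is uniform for base points in a neighbourhood of $\generictensor$ (the eigenvalues vary continuously with the symmetric matrix and remain in $I$ nearby), the derivatives $\Diff[\generictensor]\tensorf{p}_m$ converge uniformly there while $\tensorf{p}_m\to\tensorf{f}$ pointwise; by the standard theorem on differentiating a uniformly convergent sequence, $\tensorf{f}$ is differentiable at $\generictensor$ and its derivative is this limit, namely the right-hand side of \eqref{eq:136}.

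Finally, evaluating \eqref{eq:136} in an orthonormal eigenbasis of $\generictensor$, where $\tensorq{P}_i=\tensortensor{\vec{v}_i}{\vec{v}_i}$ and hence $\tensorq{P}_i\tensorq{X}\tensorq{P}_j$ isolates the single entry $\tensor{\tensorc{X}}{_{ij}}$, immediately yields the entrywise identity \eqref{eq:137}, with the diagonal quotient interpreted as $f'(\lambda_i)$. The hard part is the passage from polynomials to $C^1$ functions: the individual eigenprojections $\tensorq{P}_i$ are discontinuous in $\generictensor$ precisely where eigenvalues collide, so one cannot differentiate $\sum_i f(\lambda_i)\tensorq{P}_i$ termwise for general $f$. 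The divided-difference Lipschitz bound, uniform near $\generictensor$, is exactly what tames these degeneracies and legitimises interchanging the Gateaux derivative with the $C^1$ limit. An alternative route avoids polynomials altogether: on the dense open set of matrices with simple spectrum one may differentiate $\sum_i f(\lambda_i)\tensorq{P}_i$ using the first-order perturbation formulas $\tensorq{P}_i\tensorq{X}\tensorq{P}_i=\dot\lambda_i\tensorq{P}_i$ and $\dot{\tensorq{P}}_i=\sum_{j\neq i}(\lambda_i-\lambda_j)^{-1}(\tensorq{P}_j\tensorq{X}\tensorq{P}_i+\tensorq{P}_i\tensorq{X}\tensorq{P}_j)$, and then extend to all symmetric $\generictensor$ by the same continuity argument.
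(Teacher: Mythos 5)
The paper does not actually prove this theorem: it states it and cites Daleckii--Krein and Bhatia, where the standard proof is precisely the one you give --- verify \eqref{eq:136} for monomials via the Leibniz rule and the geometric-sum identity for divided differences, then pass to general $C^1$ functions by polynomial approximation, with the mean-value bound $\bigl|\tfrac{q(\lambda_i)-q(\lambda_j)}{\lambda_i-\lambda_j}\bigr|\le\max_I|q'|$ giving the uniform control needed to exchange the Gateaux derivative with the limit. Your argument is correct and complete, including the delicate point that for distinct indices with coincident eigenvalues the coefficient in \eqref{eq:136} must be read as $f'(\lambda_i)$; your geometric sum produces that interpretation automatically, whereas the theorem statement only flags it for $i=j$. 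The only soft spot is the closing ``alternative route'': extending the formula from the dense set of simple-spectrum matrices to all symmetric matrices by continuity presupposes that $\tensorf{f}$ is already known to be $C^1$ everywhere, which is exactly what the approximation argument supplies, so that aside cannot replace the main proof --- but you present it only as a remark, and the main line of argument stands on its own.
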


Note that in the continuum mechanics literature the same formulae have been in special cases rediscovered independently, see form example~\cite{jog.cs:explicit} for the case of matrix logarithm. Using the Daleckii--Krein formula it is straightforward to show the following two lemmas we have been frequently using in our analysis.

\begin{Lemma}
  \label{lemma:differentiation}
  Let $\tensorq{A}$ and $\tensorq{B}$ are real symmetric matrices in $\R^{k\times k}$ that commute, that is $\generictensor \tensorq{B} = \tensorq{B} \generictensor$. Let $f$ be a continuously differentiable real function defined on the open set containing the spectrum of $\generictensor$, and let $\tensorf{f}$ be the matrix valued function associated with $f$, see~\eqref{eq:135}. Then
  \begin{equation}
    \label{eq:138}
    \pd{}{\generictensor}
    \Tr \left[ \tensorq{B} \tensorf{f}(\generictensor)  \right]
    =
    \tensorq{B} \tensorf{f^\prime}(\generictensor)
    ,
  \end{equation}
  where $\tensorf{f}^\prime$ is the matrix valued function associated to $f^\prime$, that is
  \begin{equation}
    \label{eq:139}
    \tensorf{f}^\prime(\generictensor)
    =
    \sum_{i=1}^k f^\prime(\lambda_i) \tensorq{P}_i,
  \end{equation}
  where $\tensorq{A} = \sum_{i=1}^k \lambda_i \tensorq{P}_i$ denotes the spectral decomposition of $\generictensor$.
\end{Lemma}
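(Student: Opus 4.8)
The plan is to obtain the asserted gradient by computing the Gateaux derivative of the scalar map $\phi(\generictensor) =_{\bydefinition} \Tr\left[\tensorq{B}\tensorf{f}(\generictensor)\right]$ in an arbitrary direction $\tensorq{X}$ and then reading off the gradient. Since the trace is linear and $\tensorq{B}$ does not depend on $\generictensor$, the chain rule gives
\begin{equation*}
  \Diff[\generictensor]\phi(\generictensor)\left[\tensorq{X}\right]
  =
  \Tr\left[\tensorq{B}\, \Diff[\generictensor]\tensorf{f}(\generictensor)\left[\tensorq{X}\right]\right],
\end{equation*}
and I would substitute the Daleckii--Krein formula~\eqref{eq:136} for $\Diff[\generictensor]\tensorf{f}(\generictensor)\left[\tensorq{X}\right]$, which splits the result into a ``diagonal'' sum over $\tensorq{P}_i \tensorq{X} \tensorq{P}_i$ and an ``off-diagonal'' double sum over $\tensorq{P}_i \tensorq{X} \tensorq{P}_j$ with $j\neq i$.

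Next I would exploit the hypothesis that $\generictensor$ and $\tensorq{B}$ commute. As both matrices are real symmetric, they can be simultaneously diagonalised, so the normalised eigenvectors $\vec{v}_i$ of $\generictensor$ appearing in $\generictensor = \sum_i \lambda_i \tensorq{P}_i$ can be chosen to be eigenvectors of $\tensorq{B}$ as well, say $\tensorq{B}\vec{v}_i = \mu_i \vec{v}_i$. Equivalently, in this common eigenbasis $\tensorq{B}$ is diagonal, so $\tensorq{B}\tensorq{P}_i = \mu_i \tensorq{P}_i = \tensorq{P}_i \tensorq{B}$, and the off-diagonal entries $\vectordot{\vec{v}_j}{\tensorq{B}\vec{v}_i}$ vanish for $i\neq j$. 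This is the one place where the commutativity assumption is used, and it is what drives the whole computation.

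The key step is then to evaluate the trace of each sum in~\eqref{eq:136} against $\tensorq{B}$. For the off-diagonal double sum I would use the cyclic property to write $\Tr\left[\tensorq{B}\tensorq{P}_i\tensorq{X}\tensorq{P}_j\right] = \Tr\left[\tensorq{P}_j\tensorq{B}\tensorq{P}_i\tensorq{X}\right]$ and note that $\tensorq{P}_j\tensorq{B}\tensorq{P}_i$ is proportional to the off-diagonal entry of $\tensorq{B}$, which vanishes; hence every $j\neq i$ term drops out. For the diagonal sum I would use $\tensorq{B}\tensorq{P}_i = \mu_i\tensorq{P}_i$ together with $\tensorq{P}_i^2 = \tensorq{P}_i$ to get $\Tr\left[\tensorq{B}\tensorq{P}_i\tensorq{X}\tensorq{P}_i\right] = \mu_i\Tr\left[\tensorq{P}_i\tensorq{X}\right]$. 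Collecting the surviving terms yields
\begin{equation*}
  \Diff[\generictensor]\phi(\generictensor)\left[\tensorq{X}\right]
  =
  \sum_{i=1}^k \mu_i\, f^\prime(\lambda_i)\,\Tr\left[\tensorq{P}_i\tensorq{X}\right]
  =
  \Tr\left[\left(\sum_{i=1}^k \mu_i\, f^\prime(\lambda_i)\,\tensorq{P}_i\right)\tensorq{X}\right].
\end{equation*}

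Finally I would recognise the bracketed matrix as $\tensorq{B}\tensorf{f}^\prime(\generictensor)$, since $\tensorq{B} = \sum_j \mu_j\tensorq{P}_j$ and $\tensorf{f}^\prime(\generictensor) = \sum_i f^\prime(\lambda_i)\tensorq{P}_i$ multiply, by orthogonality $\tensorq{P}_j\tensorq{P}_i = \delta_{ij}\tensorq{P}_i$, to exactly $\sum_i \mu_i f^\prime(\lambda_i)\tensorq{P}_i$. Thus $\Diff[\generictensor]\phi(\generictensor)[\tensorq{X}] = \tensordot{\tensorq{B}\tensorf{f}^\prime(\generictensor)}{\tensorq{X}}$ for every $\tensorq{X}$, and because $\tensorq{B}\tensorf{f}^\prime(\generictensor)$ is symmetric (a product of two commuting symmetric matrices), the gradient is unambiguously $\tensorq{B}\tensorf{f}^\prime(\generictensor)$, which is~\eqref{eq:138}. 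I expect the \emph{main obstacle} to be the careful bookkeeping of the off-diagonal Daleckii--Krein terms: one must verify that they disappear \emph{solely} because of commutativity, and also check that the argument is insensitive to eigenvalue degeneracies, where the non-uniqueness of the individual $\tensorq{P}_i$ is harmless because a common eigenbasis can still be fixed and the Schur-type coefficients of~\eqref{eq:136} are never evaluated off the diagonal.
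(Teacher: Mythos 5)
Your proof is correct and is essentially the paper's own argument: both rest on the Daleckii--Krein formula combined with the simultaneous diagonalisability of the commuting symmetric matrices $\generictensor$ and $\tensorq{B}$, which kills the off-diagonal contributions and leaves only the $f^\prime(\lambda_i)$ terms. The only difference is presentational -- you work coordinate-free with the projector form~\eqref{eq:136} and an arbitrary direction $\tensorq{X}$, while the paper performs the same computation in index notation in the common eigenbasis via the Schur-product form~\eqref{eq:137}.
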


\begin{proof}
  Since the matrices $\tensorq{A}$ and $\tensorq{B}$ are symmetric matrices and they commute, they share the same set of eigenvectors $\left\{ \vec{v}_i \right\}_{i=1}^3$, and can be simultaneously diagonalised. (They are diagonal in the same basis.) Their spectral decomposition is $\tensorq{A} = \sum_{i=1}^k \lambda_i \tensorq{P}_i$ and $\tensorq{B} = \sum_{i=1}^k \mu_i \tensorq{P}_j$ where $\left\{\lambda_i\right\}_{i=1}^k$ and  $\left\{\mu_i\right\}_{i=1}^k$ denote the eigenvalues of $\generictensor$ and $\tensorq{B}$ respectively. Using the basis wherein the matrices have the diagonal form, the left-hand side of~\eqref{eq:138} reads
  \begin{multline}
    \label{eq:140}
    \tensor{
      \left[
        \pd{}{\generictensor}
        \Tr \left[ \tensorq{B} \tensorf{f}(\generictensor)  \right]
      \right]
    }{_{op}}
    =
    \pd{}{\tensor{\generictensorc}{_{op}}}
    \left[
      \sum_{m,\, n=1}^k
      \mu_m
      \kdelta{_{mn}}
      \tensor{[\tensorf{f}(\generictensor)]}{_{mn}}
    \right]
    =
    \sum_{m,\, n=1}^k
    \mu_m
    \kdelta{_{mn}}
    \pd{\tensor{[\tensorf{f}(\generictensor)]}{_{mn}}}{\tensor{\generictensorc}{_{op}}}
    \\
    =
    \sum_{m,\, n=1}^k
    \mu_m
    \kdelta{_{mn}}
    \frac{f(\lambda_m) - f(\lambda_n)}{\lambda_m - \lambda_n}
    \kdelta{_{mo}}
    \kdelta{_{np}}
    =
    \sum_{o=1}^k
    \mu_o
    f^\prime(\lambda_o)
    \kdelta{_{op}}
    =
    \tensorq{B}
    \tensorf{f}^\prime\left(\generictensor\right)
    .
  \end{multline}
\end{proof}

\begin{Lemma}
  \label{lemma:differentiation-time}
  Suppose that $\tensorq{A}$ and $\tensorq{B}$ are real symmetric matrices in $\R^{k\times k}$ that are differentiable functions with respect to $t \in \R$. Furthermore, suppose that these matrices commute for all $t \in \R$, that is $\generictensor \tensorq{B} = \tensorq{B} \generictensor$, and let $\tensorf{f}$ be a continuously differentiable real function defined on the open set containing the spectrum of $\generictensor$ at every $t \in \R$. Then
  \begin{equation}
    \label{eq:141}
    \Tr
    \left[
      \tensorq{B} \dd{\tensorf{f}(\generictensor)}{t}
    \right]
    =
    \Tr
    \left[
      \tensorq{B} \tensorf{f}^\prime\left(\generictensor\right)
      \dd{\generictensor}{t}
    \right]
    ,
  \end{equation}
  where $\tensorf{f}^\prime$ is the matrix valued function associated to $f^\prime$, see~\eqref{eq:139}.
\end{Lemma}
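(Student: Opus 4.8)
The plan is to reduce the time-derivative identity to the already-established gradient formula of Lemma~\ref{lemma:differentiation} via the chain rule, so that essentially no fresh computation with the Daleckii--Krein formula is required. The one point to keep in mind throughout is that $\tensorq{B}$ is held \emph{fixed} (it is not differentiated), and that the commutativity $\generictensor\tensorq{B}=\tensorq{B}\generictensor$ is exactly what makes the trace pairing collapse onto the ``diagonal'' $f^\prime$-contribution.

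First I would invoke the chain rule for the Gateaux derivative. Since $t\mapsto\generictensor(t)$ is differentiable and $\tensorf{f}$ is differentiable at $\generictensor(t)$ by the Daleckii--Krein formula, one has
\begin{equation*}
  \dd{\tensorf{f}(\generictensor)}{t}
  =
  \Diff[\generictensor]\tensorf{f}(\generictensor)\left[\dd{\generictensor}{t}\right].
\end{equation*}
Substituting into the left-hand side of~\eqref{eq:141} leaves us to prove, for the particular direction $\tensorq{X}=\dd{\generictensor}{t}$, the purely algebraic identity
\begin{equation*}
  \Tr\left[\tensorq{B}\,\Diff[\generictensor]\tensorf{f}(\generictensor)[\tensorq{X}]\right]
  =
  \Tr\left[\tensorq{B}\,\tensorf{f}^\prime(\generictensor)\,\tensorq{X}\right],
\end{equation*}
valid for every symmetric $\tensorq{X}$ whenever $\tensorq{B}$ commutes with $\generictensor$.

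To establish this algebraic identity I would read it as a statement about the scalar function $\generictensor\mapsto\Tr[\tensorq{B}\tensorf{f}(\generictensor)]$ with $\tensorq{B}$ frozen. Its Gateaux derivative in the direction $\tensorq{X}$ is, on the one hand, $\Tr[\tensorq{B}\,\Diff[\generictensor]\tensorf{f}(\generictensor)[\tensorq{X}]]$ by linearity of the trace, and on the other hand $\tensordot{\pd{}{\generictensor}\Tr[\tensorq{B}\tensorf{f}(\generictensor)]}{\tensorq{X}}$ by the definition of the gradient. Lemma~\ref{lemma:differentiation} evaluates this gradient as $\tensorq{B}\tensorf{f}^\prime(\generictensor)$, so the left-hand side equals $\tensordot{\tensorq{B}\tensorf{f}^\prime(\generictensor)}{\tensorq{X}}=\Tr[\transpose{(\tensorq{B}\tensorf{f}^\prime(\generictensor))}\tensorq{X}]$. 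Because $\tensorq{B}$ and $\generictensor$ are symmetric and commute, $\tensorq{B}$ also commutes with $\tensorf{f}^\prime(\generictensor)$ (both are diagonal in a common eigenbasis), so $\tensorq{B}\tensorf{f}^\prime(\generictensor)$ is symmetric and the transpose may be dropped, yielding exactly $\Tr[\tensorq{B}\tensorf{f}^\prime(\generictensor)\tensorq{X}]$. Taking $\tensorq{X}=\dd{\generictensor}{t}$ then gives~\eqref{eq:141}.

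There is no serious obstacle once Lemma~\ref{lemma:differentiation} is in hand; the only care needed is the bookkeeping above. If one instead prefers a self-contained argument, I would substitute~\eqref{eq:136} directly: writing $\tensorq{B}=\sum_m\mu_m\tensorq{P}_m$ in the shared eigenbasis and using $\tensorq{P}_i\tensorq{P}_j=\kdelta{_{ij}}\tensorq{P}_i$ together with the cyclic property of the trace, each off-diagonal term with $i\neq j$ contributes $\Tr[\tensorq{B}\tensorq{P}_i\tensorq{X}\tensorq{P}_j]=\mu_i\Tr[\tensorq{P}_j\tensorq{P}_i\tensorq{X}]=0$, so only the first sum in~\eqref{eq:136} survives and reproduces $\sum_i\mu_i f^\prime(\lambda_i)\Tr[\tensorq{P}_i\tensorq{X}]=\Tr[\tensorq{B}\tensorf{f}^\prime(\generictensor)\tensorq{X}]$. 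The anticipated ``hard part'' --- differentiating a merely $C^1$ matrix-valued function --- is entirely absorbed into the cited Daleckii--Krein formula, which is precisely why routing through it (or through Lemma~\ref{lemma:differentiation}) is the efficient path.
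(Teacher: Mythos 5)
Your proposal is correct. Your primary route differs from the paper's in packaging rather than in substance: the paper applies the chain rule and then substitutes the Daleckii--Krein formula~\eqref{eq:136} directly, expanding $\tensorq{B}=\sum_l\mu_l\tensorq{P}_l$ in the shared eigenbasis and using $\tensorq{P}_i\tensorq{P}_j=\kdelta{_{ij}}\tensorq{P}_j$ with cyclicity of the trace to annihilate the off-diagonal terms --- which is precisely the ``self-contained'' alternative you sketch at the end. Your main argument instead recycles Lemma~\ref{lemma:differentiation}: you identify $\Tr[\tensorq{B}\,\Diff[\generictensor]\tensorf{f}(\generictensor)[\tensorq{X}]]$ as the directional derivative of the scalar map $\generictensor\mapsto\Tr[\tensorq{B}\tensorf{f}(\generictensor)]$, pair the already-computed gradient $\tensorq{B}\tensorf{f}^\prime(\generictensor)$ with $\tensorq{X}=\dd{\generictensor}{t}$, and use the symmetry of $\tensorq{B}\tensorf{f}^\prime(\generictensor)$ (which follows from commutativity) to drop the transpose in the Frobenius pairing. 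This buys you a proof with no new spectral computation and makes explicit that the two lemmas are really one statement read in two ways (gradient versus directional derivative); the paper's direct computation buys independence from Lemma~\ref{lemma:differentiation} and displays concretely why only the diagonal $f^\prime$-blocks survive. Both arguments correctly avoid assuming that $\dd{\generictensor}{t}$ commutes with anything, which is the point of the closing remark in the paper that the identity fails without the trace.
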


\begin{proof}
  We use the chain rule, Daleckii--Krein formula and spectral decompostion $\tensorq{A} = \sum_{i=1}^k \lambda_i \tensorq{P}_i$ and $\tensorq{B} = \sum_{i=1}^k \mu_i \tensorq{P}_j$,
  \begin{multline}
    \label{eq:142}
    \Tr
    \left[
      \tensorq{B} \dd{\tensorf{f}(\generictensor)}{t}
    \right]
    =
    \Tr
    \left[
      \tensorq{B} \pd{\tensorf{f}(\generictensor)}{\generictensor} \left[\dd{\generictensor}{t}\right]
    \right]
    \\
    =
    \Tr
    \left[
      \left(
        \sum_{l=1}^k \mu_l \tensorq{P}_l
      \right)
      \left(
        \sum_{i=1}^k f^\prime \left( \lambda_i \right)
        \tensorq{P}_i \dd{\generictensor}{t} \tensorq{P}_i
        +
        \sum_{i=1}^k
        \sum_{\substack{j=1\\j \not= i}}^k
        \frac{f(\lambda_i) - f(\lambda_j)}{\lambda_i - \lambda_j}
        \tensorq{P}_i \dd{\generictensor}{t} \tensorq{P}_j
      \right)
    \right]
    \\
    =
    \Tr
    \left[
      \sum_{i=1}^k
      \mu_i
      f^\prime \left( \lambda_i \right)
      \tensorq{P}_i
      \dd{\generictensor}{t} 
      \tensorq{P}_i
    \right]
    =
    \Tr
    \left[
      \tensorq{B} \tensorf{f}^\prime\left(\generictensor\right)
      \dd{\generictensor}{t}
    \right]
    ,
  \end{multline}
  where we have used the fact that $\tensorq{P}_i \tensorq{P}_j = \kdelta{_{ij}} \tensorq{P}_j$ and the properties of trace. (No summation.)
\end{proof}

Note that the identity does not in general hold without the trace. For example it is not in general true that $\dd{\exponential{\generictensor}}{t} = \exponential{\generictensor} \dd{\generictensor}{t}$.





\bibliographystyle{chicago}
\bibliography{vit-prusa}

\end{document}